\documentclass[12pt]{article}

\usepackage{latexsym,amssymb,amsmath}
\usepackage{amsthm, amstext}
\usepackage{array, amsfonts, mathrsfs}
\usepackage[dvips]{graphicx}

\newcommand{\eref}[1]{(\ref{#1})}

\newtheorem{theorem}{Theorem}
\newtheorem{lemma}{Lemma}
\newtheorem{proposition}{Proposition}

\numberwithin{equation}{section}

\date{}
\begin{document}

\author{M.I.Belishev\thanks {Saint-Petersburg Department of
                 the Steklov Mathematical Institute, Saint-Petersburg State University, Russia;
                 belishev@pdmi.ras.ru. 
                 }\,\,
                 and M.N.Demchenko\thanks{Saint-Petersburg Department of
                 the Steklov Mathematical Institute;
                 demchenko@pdmi.ras.ru. 
                 }
                 }

\title{Elements of noncommutative geometry in inverse problems on manifolds}
\maketitle

\begin{abstract}
We deal with two dynamical systems associated with a Riemannian
manifold with boundary. The first one is a system governed by the
scalar wave equation, the second is governed by the Maxwell
equations. Both of the systems are controlled from the boundary.
The inverse problem is to recover the manifold via the relevant
measurements at the boundary (inverse data).

We show that the inverse data determine a C*-algebras, whose
(topologized) spectra are identical to the manifold. By this, to
recover the manifold is to determine a proper algebra from the
inverse data, find its spectrum, and provide the spectrum with a
Riemannian structure.

The paper develops an algebraic version of the boundary control
method, which is an approach to inverse problems based on their
relations to control theory.
\end{abstract}

\setcounter{equation}{0} 

\section{Introduction}
\subsubsection*{About the paper}
One of the basic theses of noncommutative geometry is that a
topological space can be characterized via an algebra associated
with it \cite{Connes}, \cite{Landi}, \cite{RenVar}. In other
words, a space can be encoded into an algebra. As was recognized
in \cite{BCald} and \cite{BSobolev}, such a coding is quite
relevant and efficient for solving inverse problems on manifolds.
In particular, it enables one to reconstruct a Riemannian manifold
via its dynamical or spectral boundary inverse data.

Namely, it is shown that a Riemannian manifold $\Omega$ can be
identified with the (topologized) spectrum $\widehat{{\mathfrak A}(\Omega)}$
of an appropriate Banach algebra ${\mathfrak A}(\Omega)$, the algebra being
determined by the inverse data up to isometric isomorphism.
Therefore, one can reconstruct $\Omega$ by the scheme:
\begin{itemize}
\item extract an isometric copy $\tilde {\mathfrak A}(\Omega)$ of
${\mathfrak A}(\Omega)$ from the data

\item find its  spectrum $\widehat{\tilde {\mathfrak A}(\Omega)}=: \tilde
\Omega$, which is homeomorphic to $\widehat{{\mathfrak A}(\Omega)}$ by
virtue of $\tilde {\mathfrak A}(\Omega)\overset{\rm isom}= {\mathfrak A}(\Omega)$. Thus, we have
$\tilde \Omega\overset{\rm hom}= \Omega$

\item endow $\tilde \Omega$ with a proper Riemannian structure.
\end{itemize}
As a result, we get a Riemannian manifold $\tilde \Omega$
isometric to the original $\Omega$ by construction. It is $\tilde
\Omega$, which solves the reconstruction problem.

Our paper keeps this scheme and extends it to the inverse problem
of electrodynamics.

\subsubsection*{Content}
We deal with a smooth compact Riemannian manifold $\Omega$ with
boundary.
\smallskip

\noindent{\bf Eikonals.}\,\,\,We introduce the {\it eikonals},
which play the role of main instrument for reconstruction. An
eikonal $\tau_\sigma(\cdot)={\rm dist\,}(\cdot, \sigma)$ is a
distance function on $\Omega$ with the base $\sigma \subset
\partial \Omega$. The eikonals determine the Riemannian structure
on $\Omega$.

With each eikonal one associates a self-adjoint operator $\check
\tau_\sigma$ in $L_2(\Omega)$, which multiplies functions by
$\tau_\sigma$. Its representation via the Spectral Theorem is
$\check \tau_\sigma=\int_0^\infty s\,dX^s_\sigma$, where
$X^s_\sigma$ is
the projection onto the subspace $L_2(\Omega^s[\sigma])$ 
of functions supported in the metric neighborhood
$\Omega^s[\sigma] \subset \Omega$ of $\sigma$ of radius $s$.

For an oriented 3d-manifold $\Omega$, by analogy with the scalar
case, we introduce the {\it solenoidal eikonals}
$\varepsilon_\sigma=\int_0^\infty s\,dY^s_\sigma$, which act in the space
${\cal C}=\{{\rm curl\,} h\,|\,\,h, {\rm curl\,} h \in \vec L_2(\Omega)\}$ relevant
to electrodynamics. Here $Y^s_\sigma$ projects vector-fields onto
the subspace of curls supported in $\Omega^s[\sigma]$.
\medskip

\noindent{\bf Algebras.}\,\,\,Eikonals $\{\tau_\sigma\,|\,\,\sigma
\subset \partial \Omega\}$ generate the Banach algebra $C(\Omega)$
of real continuous functions. By the Gelfand theorem, its Gelfand
spectrum (the set of characters) $\widehat{C(\Omega)}$ is
homeomorphic to $\Omega$ \cite{Mur}, \cite{Nai}.

Operator eikonals $\{\check \tau_\sigma\,|\,\,\sigma \subset
\partial \Omega\}$ generate an operator algebra ${\mathfrak T}$, which is a
commutative C*-subalgebra of the bounded operator algebra
${\mathfrak B}(L_2(\Omega))$. The algebras ${\mathfrak T}$ and $C(\Omega)$ are
isometrically isomorphic (via $\check \tau_\sigma \mapsto
\tau_\sigma$). By this, their spectra are homeomorphic, and we
have $\widehat {\mathfrak T} \overset{\rm hom}= \widehat{C(\Omega)} \overset{\rm hom}= \Omega$.

Solenoidal eikonals generate an operator algebra ${\mathfrak E}$, which is a
C*-sub\-algebra of ${\mathfrak B}({\cal C})$. In contrast to ${\mathfrak T}$, the algebra
${\mathfrak E}$ is {\it noncommutative}. However, the factor-algebra $\dot
{\mathfrak E}={\mathfrak E} \slash {\mathfrak K}$ over the ideal of compact operators ${\mathfrak K} \in
{\mathfrak E}$ turns out to be commutative. Moreover, one has $\dot {\mathfrak E}\overset{\rm isom}=
C(\Omega)$ that implies $\widehat {\dot {\mathfrak E}} \overset{\rm hom}= \widehat{C(\Omega)}
\overset{\rm hom}= \Omega$.
\medskip

\noindent{\bf Inverse problems.}\,\,\,Following \cite{BSobolev},
we begin with a dynamical system, which is governed by the scalar
{\it wave equation} in $\Omega$ and controlled from the boundary
$\partial \Omega$. The input$\mapsto$output correspondence is
realized by a {\it response operator} $R$, which plays the role of
inverse data. A reconstruction (inverse) problem is to recover the
manifold $\Omega$ via given $R$.

Solving this problem, we construct (via $R$) an operator algebra
$\tilde {\mathfrak T}$ isometric to ${\mathfrak T}$, find its spectrum $\tilde
\Omega:=\widehat{\tilde {\mathfrak T}} \overset{\rm hom}= \widehat{{\mathfrak T}} \overset{\rm hom}= \Omega$, endow it
with the Riemannian structure by the use of images of eikonals,
and eventually turn $\tilde \Omega$ into an isometric copy of the
original manifold $\Omega$. The copy $\tilde \Omega$ provides the
solution to the reconstruction problem.
\smallskip

In electrodynamics, the corresponding system is governed by the
{\it Maxwell equations} and also controlled from the boundary. The
relevant response operator $R$ plays the role of inverse data for
the reconstruction problem. To solve this problem, we repeat all
the steps of the above described procedure. The only additional
step is the factorization ${\mathfrak E}\mapsto \dot {\mathfrak E}$, which eliminates
noncommutativity.
\medskip

\noindent{\bf Appendix.}\,\,\,Here the basic lemmas on the
eikonals $\varepsilon_\sigma$ and algebra ${\mathfrak E}$ are proven.

\subsubsection*{Comments}
{\bf What is "to recover a manifold"?}\,\,\,Setting the goal to
determine $\Omega$ from $R$, one has to take into account the
evident nonuniqueness of such a determination. Indeed, if two
manifolds $\Omega$ and $\Omega^\prime$ are isometric and have the
mutual boundary $\partial \Omega=\partial \Omega^\prime$ then
their boundary inverse data (in particular, the response
operators) turn out to be identical. Hence, the correspondence
$\Omega \mapsto R$ in not injective and to recover the original
$\Omega$ via $R$ is impossible.

From the physical viewpoint, the inverse data formalize the
measurements, which the external observer implements at the
boundary. The above mentioned nonuniqueness means that the
observer is not able to distinguish $\Omega$ from $\Omega^\prime$
in principle. In such a situation, the only reasonable
understanding of the reconstruction problem is the following: {\it
to construct a manifold $\tilde \Omega$, which possesses the
prescribed inverse data}. It is the above mentioned isometric copy
$\tilde \Omega$, which satisfies this requirement: we have $\tilde
R=R$ by construction.
\medskip

\noindent{\bf Remark}\,\,\,Reconstruction via algebras is known
in Noncommutative Geometry: see \cite{Connes}, \cite{Landi},
\cite{RenVar}. However, there is a principle difference: in the
mentioned papers the starting point for reconstruction is the
so-called {\it spectral triple} $\{{\cal A}, {\cal H}, {\cal D}\}$, which
consists of a commutative algebra, a Hilbert space, and a self-adjoint (Dirac-like)
operator. So, an algebra {\it is given}.

In our case, we at first have to extract an algebra from $R$. Then
we deal with this algebra imposed by inverse data, whereas its
"good" properties are not guaranteed. For instance, a metric graph
is a "commutative space"\, but its eikonal algebra ${\mathfrak T}$ turns out
to be strongly noncommutative \footnote{no factorization turns
${\mathfrak T}$ into a commutative algebra}. The latter leads to
difficulties in reconstruction problem, which are not overcome
yet.
\smallskip

Reconstruction via algebras in inverse problems was originated in
\cite{BCald} and developed in \cite{BSobolev}. It represents an
algebraic version of the {\it boundary control method}, which is
an approach to inverse problems based on their relations to
control theory \cite{BIP97}, \cite{BIP'07}. We hope for further
applications of this version to inverse problems of mathematical
physics.
\medskip

\noindent{\bf Acknowledgements}\,\,\,The authors thank B.A.Plamenevskii for kind and useful consultations.
The work is supported by the
grants RFBR 11-01-00407A, RFBR 12-01-31446, SPbGU 11.38.63.2012, 6.38.670.2013
and RF Government grant 11.G34.31.0026.

\section{Eikonals}
We deal with a real smooth\footnote{everywhere in the paper,
"smooth" means $C^\infty$-smooth} compact Riemannian manifold
$\Omega$ with the boundary $\Gamma$, $g$ is the metric tensor,
${\rm dim\,}\Omega=n\geqslant 2$.

For a set $A \subset \Omega$, by $$\Omega^r[A]:=\{x \in
\Omega\,|\,\,{\rm dist\,}(x,A)<r \},\qquad r>0$$ we denote its metric
$r$-neighborhood. Compactness implies ${\rm diam\,}\Omega:= \\\sup
\{{\rm dist\,}(x,y)\,|\,\,x,y \in \Omega\}<\infty$ and
\begin{equation}\label{Omega r[A]= Omega}\Omega^r[A]=\Omega \qquad {\rm
as}\,\,\,r>{\rm diam\,}\Omega\,.\end{equation}

\subsection{Scalar eikonals}
Let us say a subset $\sigma \subset \Gamma$ to be {\it regular}
and write $\sigma \in {\cal R}(\Gamma)$ if $\sigma$ is diffeomorphic to a
"disk" $\{p \in {\mathbb R}^{n-1}\,|\,\,\|p\|\leqslant 1\}$.

By a (scalar) {\it eikonal} we name a distant function of the form
$$\tau_\sigma (x):={\rm dist\,} (x, \sigma), \qquad x \in \Omega \quad
(\sigma \in {\cal R}(\Gamma))\,.$$ The set $\sigma$ is said to be a {\it
base}. Eikonals are Lipschitz functions: $\tau_\sigma \in {\rm
Lip}(\Omega) \subset C(\Omega)$. Moreover, eikonals are smooth
almost everywhere and \begin{equation}\label{nabla tau=1} |\nabla
\tau_\sigma(x)|=1 \qquad {\rm a.a.}\,\, x \in \Omega
\end{equation} holds. Also, note the following simple geometric facts.
\begin{proposition}\label{Prop 1}
For any $x \in \Omega$ there is $\sigma \in {\cal R}(\Gamma)$ such that
$\tau_\sigma(x)\not=0$. For any different $x, x' \in \Omega$ there
is a $\sigma \in {\cal R}(\Gamma)$ such that
$\tau_\sigma(x)\not=\tau_\sigma(x')$ (i.e., the eikonals
distinguish points of $\Omega$). The equality $\sigma=\{\gamma \in
\Gamma\,|\,\,\tau_\sigma(\gamma)=0\}$ holds.
\end{proposition}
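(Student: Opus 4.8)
The three assertions have increasing depth, and I would dispatch them in the order (3), (1), (2), since the last contains the essential geometry.

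For the equality $\sigma=\{\gamma\in\Gamma\,|\,\tau_\sigma(\gamma)=0\}$, the key observation is that $\sigma$, being diffeomorphic to a closed disk, is compact and hence closed in $\Omega$; therefore the infimum defining ${\rm dist}(\cdot,\sigma)$ is attained, and $\tau_\sigma(x)=0$ holds if and only if $x\in\sigma$. Specializing $x$ to boundary points $\gamma\in\Gamma$ gives the stated equality immediately. For the first assertion, the same remark shows $\tau_\sigma(x)\neq 0$ is equivalent to $x\notin\sigma$: if $x$ is interior this holds for every $\sigma$ (as $\sigma\subset\Gamma$), while if $x\in\Gamma$ it suffices to exhibit one regular disk missing $x$. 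Since $\Gamma$ is a smooth $(n-1)$-manifold with $n\geqslant 2$, it has points distinct from $x$, and a sufficiently small geodesic disk about any such point lies in ${\cal R}(\Gamma)$ and avoids $x$.

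The substance is the second assertion, which I would prove by contradiction: assume $\tau_\sigma(x)=\tau_\sigma(x')$ for every $\sigma\in{\cal R}(\Gamma)$ and deduce $x=x'$. The first step reduces disk-distances to point-distances. Fixing $\gamma\in\Gamma$, I would choose regular disks $\sigma_k\ni\gamma$ centered at $\gamma$ with ${\rm diam\,}\sigma_k\to 0$. Since $\gamma\in\sigma_k$ one has $\tau_{\sigma_k}(x)\leqslant{\rm dist}(x,\gamma)$, and since every $y\in\sigma_k$ satisfies ${\rm dist}(y,\gamma)\leqslant{\rm diam\,}\sigma_k$, the triangle inequality gives $\tau_{\sigma_k}(x)\geqslant{\rm dist}(x,\gamma)-{\rm diam\,}\sigma_k$; hence $\tau_{\sigma_k}(x)\to{\rm dist}(x,\gamma)$, and similarly for $x'$. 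Passing to the limit in $\tau_{\sigma_k}(x)=\tau_{\sigma_k}(x')$ yields ${\rm dist}(x,\gamma)={\rm dist}(x',\gamma)$ for all $\gamma\in\Gamma$. Thus it remains to show that the boundary distance functions separate the points of $\Omega$.

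For this last step, if one of the points, say $x$, lies on $\Gamma$, the choice $\gamma=x$ forces ${\rm dist}(x',x)={\rm dist}(x,x)=0$, so $x'=x$. If both are interior, set $d:={\rm dist}(x,\Gamma)=\min_{\gamma}{\rm dist}(x,\gamma)$; equality of the two distance families forces ${\rm dist}(x',\Gamma)=d$ and shows that any nearest boundary point $\gamma_0$ of $x$ is also a nearest boundary point of $x'$. A minimizing geodesic realizing ${\rm dist}(x,\Gamma)$ meets $\Gamma$ orthogonally at $\gamma_0$ (by the first variation of arc length), so $x=\exp_{\gamma_0}(d\,\nu)$, with $\nu$ the inward unit normal at $\gamma_0$; the identical formula recovers $x'$, whence $x=x'$, contradicting $x\neq x'$. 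The main obstacle is precisely this geometric conclusion — justifying that the distance to $\Gamma$ is realized by a geodesic orthogonal to the boundary and that the inward normal exponential map reconstructs the interior point uniquely from the pair $(\gamma_0,d)$; the preceding reductions are routine once this is in hand.
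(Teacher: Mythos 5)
Your proof is correct. The paper states Proposition \ref{Prop 1} without proof, dismissing it as a collection of ``simple geometric facts,'' so there is no argument of the authors' to compare yours against; judged on its own, your treatment is complete and the order (3), (1), (2) is sensible. The reduction of disk-distances to point-distances via shrinking regular neighborhoods $\sigma_k\ni\gamma$ is clean, and the final separation step is the genuinely geometric part, exactly as you flag. One point worth making explicit there: the minimizing curve $c$ from an interior point $x$ to $\Gamma$ cannot touch $\Gamma$ before its terminal time, since otherwise a proper initial segment of $c$ would already join $x$ to $\Gamma$ with strictly smaller length; hence $c$ runs in the open interior, is a genuine geodesic there, extends smoothly to the endpoint (embed $\Omega$ in a slightly larger open manifold), and the first-variation argument then gives orthogonality at $\gamma_0$. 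With that, uniqueness of the geodesic with initial data $(\gamma_0,\nu)$ yields $x=\exp_{\gamma_0}(d\,\nu)=x'$ as you claim. A minor bookkeeping remark: the diameter of $\sigma_k$ in your squeeze should be measured in the metric of $\Omega$, which is dominated by the intrinsic diameter in $\Gamma$, so shrinking geodesic disks in $\Gamma$ indeed suffice.
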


\subsubsection*{Copy $\tilde \Omega$}
As functions on $\Omega$, eikonals are determined by the
Riemannian structure of $\Omega$. The converse is also true in the
following sense.

Assume that we are given with a topological space $\tilde \Omega$,
which is homeomorphic to $\Omega$ (with the Riemann metric
topology) via a homeomorphism $\eta: \Omega \to \tilde \Omega$;
let $\tilde \tau_\sigma:= \tau_\sigma \circ \eta^{-1}$. Also,
assume that $\eta$ {\it is unknown} but we are given with the map
\begin{equation}\label{map sigma to tau sigma} 
{\cal R}(\Gamma) \ni \sigma \mapsto
\tilde \tau_\sigma \in C(\tilde \Omega). \end{equation} Then one can endow $\tilde
\Omega$ with the Riemannian structure, which turns it into a
manifold {\it isometric} to $\Omega$. 
Roughly speaking, the way is the following \footnote{see
\cite{BD_1} for detail}.

For a fixed point $p \in \tilde \Omega$ one can find its neighborhood
$\omega \subset \tilde \Omega$ and the sets $\sigma_1, \dots ,
\sigma_n \in {\cal R}(\Gamma)$ such that the functions $x^1=\tilde
\tau_{\sigma_1}(\,\cdot\,),\, \dots , \,x^n=\tilde
\tau_{\sigma_n}(\,\cdot\,)$ constitute a coordinate chart $\phi:
\omega \ni p \mapsto \{x^k(p)\}_{k=1}^n \in {\mathbb R}^n$. The
coordinates endow $\omega$ with tangent spaces. These spaces can
be provided with the metric tensor $\tilde g=\eta_*g$: one can
determine its components $\tilde g^{ij}$ from the equations
\begin{equation}\label{eqn for g^ik}{\tilde g}^{ij}(x)\,\frac{\partial
\tilde\tau_\sigma \circ \phi^{-1}}{\partial x^i}(x)\,\frac{\partial
\tilde\tau_\sigma \circ \phi^{-1}}{\partial x^j}(x)=1, \qquad x \in
\phi(\omega)\,,\,\,\,\sigma \in {\cal R}(\Gamma)\end{equation} which are just
(\ref{nabla tau=1}) written in coordinates. Choosing here
$\sigma=\sigma_i$, we get $\tilde g^{ii}=1$. Choosing (a finite
number of) additional sets $\sigma$, we can determine the
functions $\frac{\partial \tilde\tau_\sigma \circ \phi^{-1}}{\partial
x^i}$ and then find all other components ${\tilde g}^{ij}(x)$ by
solving the system (\ref{eqn for g^ik}) with respect to them.

So, although the homeomorphism $\eta$ is unknown, we are able to
endow $\tilde \Omega$ with the metric tensor $\tilde g=\eta_* g$,
which turns it into a Riemannian manifold $(\tilde \Omega, \tilde
g)$ isometric to $(\Omega, g)$ by construction.

Moreover, there is a natural way to identify the boundaries
$\tilde \Gamma:=\partial \tilde \Omega$ and $\Gamma=\partial
\Omega$. At first, we can select the boundary points in $\tilde
\Omega$ by $$\tilde \Gamma=\bigcup \limits_{\sigma \in
{\cal R}(\Gamma)}\tilde \sigma, \quad {\rm where}\,\,\,\,\tilde
\sigma:=\{\tilde \gamma \in \tilde \Omega\,|\,\,\tilde
\tau_\sigma(\tilde \gamma)=0\}.$$ Then we identify $\Gamma \ni
\gamma \equiv \tilde \gamma \in \tilde \Gamma$ if $\gamma \in
\sigma$ implies $\tilde \gamma \in \tilde \sigma$ for all regular
$\sigma$ containing $\gamma$.

As a result, we get the manifold $(\tilde \Omega, \tilde g)$
isometric to $(\Omega, g)$, these manifolds having the mutual
boundary $\Gamma$. In what follows we refer to $(\tilde \Omega,
\tilde g)$ as a canonical copy of the original manifold $\Omega$
(shortly: the {\it copy} $\tilde \Omega$).
\medskip

The aforesaid is summarized as follows.
\begin{proposition}\label{Prop 2}
A space $\tilde \Omega$ along with the map 
$(\ref{map sigma to tau sigma})$ determine the copy $\tilde
\Omega$ and, hence, determine $\Omega$ up to isometry of
Riemannian manifolds.
\end{proposition}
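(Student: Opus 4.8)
The plan is to reconstruct the smooth and Riemannian structure of $\tilde\Omega$ directly from the eikonal data $\sigma\mapsto\tilde\tau_\sigma$, exactly along the lines sketched before the statement. The guiding observation is that, since $\tilde\tau_\sigma=\tau_\sigma\circ\eta^{-1}$ with $\eta$ a homeomorphism, any geometric object I build from the $\tilde\tau_\sigma$ is automatically the $\eta$-pushforward of the corresponding object on $\Omega$; hence the resulting manifold is isometric to $(\Omega,g)$ by construction, and the only content of the proposition is that these objects are genuinely recoverable from the data alone, without knowing $\eta$. There are three things to produce: a smooth atlas on $\tilde\Omega$, the metric tensor $\tilde g$, and the identification of the boundary.

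First I would build the atlas. Fix $p\in\tilde\Omega$. Using Proposition \ref{Prop 1} (the eikonals separate points) together with the almost-everywhere smoothness of eikonals, I select bases $\sigma_1,\dots,\sigma_n\in{\cal R}(\Gamma)$ whose differentials $d\tilde\tau_{\sigma_1},\dots,d\tilde\tau_{\sigma_n}$ are linearly independent at $p$, so that $\phi:=(\tilde\tau_{\sigma_1},\dots,\tilde\tau_{\sigma_n})$ is a local homeomorphism of a neighborhood $\omega\ni p$ onto an open set of $\mathbb R^n$. Declaring all such $\phi$ to be charts equips $\tilde\Omega$ with a smooth structure; the transition maps are smooth because, after composing with $\eta$, they coincide with coordinate transitions on $\Omega$, where the $\tau_\sigma$ are smooth near the corresponding point. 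This is precisely the structure for which $\eta$ becomes a diffeomorphism.

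Next I recover the metric in each such chart. Written in the coordinates $x^k=\tilde\tau_{\sigma_k}$, relation \eref{eqn for g^ik} is \emph{linear} in the unknowns $\tilde g^{ij}$, with coefficients $\partial_i(\tilde\tau_\sigma\circ\phi^{-1})\,\partial_j(\tilde\tau_\sigma\circ\phi^{-1})$ that are computable from the data by differentiating the known functions. Choosing $\sigma=\sigma_k$ gives $\tilde g^{kk}=1$ immediately, and each further base $\sigma$ contributes one scalar equation for the off-diagonal entries. Solving this linear system pins down $\tilde g$, and since \eref{eqn for g^ik} is just $|\nabla\tau_\sigma|=1$ transported by $\eta$, the solution coincides with $\eta_* g$. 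The boundary is then recovered as the union of the zero sets $\tilde\sigma=\{\tilde\tau_\sigma=0\}$, with the point identification described above, so that $\Gamma$ is shared by both manifolds.

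The main obstacle is the solvability of the linear system for $\tilde g^{ij}$: I must guarantee that enough bases $\sigma$ yield independent equations, i.e. that the rank-one symmetric forms $d\tilde\tau_\sigma\otimes d\tilde\tau_\sigma$ span the full space of symmetric $2$-tensors at $p$ (of dimension $n(n+1)/2$). Equivalently, the realized gradient directions $\nabla\tilde\tau_\sigma(p)$ must be rich enough, since rank-one forms built from an open set of unit directions already span. I would establish this from the geometry of eikonals: $\nabla\tau_\sigma(x)$ is the unit tangent at $x$ to the minimizing geodesic from $\sigma$ to $x$, so by sliding a small disk $\sigma$ along $\Gamma$---and using that minimizing geodesics from an interior point reach the boundary in an open cone of directions---the gradient sweeps an open subset of the unit sphere. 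This yields more than $n(n+1)/2$ directions in general position, hence a full-rank system and a unique $\tilde g$. Collecting the atlas, the metric, and the boundary identification produces the copy $(\tilde\Omega,\tilde g)$ and proves the claim.
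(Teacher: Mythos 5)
Your proposal follows essentially the same route as the paper: eikonal coordinate charts around each point, recovery of the components $\tilde g^{ij}$ by solving the linear system \eref{eqn for g^ik} obtained from finitely many bases $\sigma$, and identification of the boundary via the zero sets $\tilde\sigma$. The paper only sketches this argument and defers the details to \cite{BD_1}; the one point you treat more explicitly --- the solvability of the linear system, via the claim that the rank-one forms $d\tilde\tau_\sigma\otimes d\tilde\tau_\sigma$ span the symmetric $2$-tensors at $p$ because the gradients sweep an open set of directions --- is precisely the detail the paper omits, so your write-up is, if anything, slightly more complete than the text's.
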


\subsection{Operator eikonals}
Introduce the space ${\cal H}:=L_2(\Omega)$ with the inner product
$$(u,v)_{\cal H}\,=\,\int_\Omega u(x) v(x)\, dx\,.$$ Let $A \subset \Omega$ be a
measurable subset, $\chi_A(\,\cdot\,)$ its indicator (a
characteristic function). By
$${\cal H}\langle A \rangle:=\{\chi_A y\,|\,\,y \in {\cal H}\}$$
we denote the subspace of functions supported on $A$. The
(orthogonal) projection $X_A$ in ${\cal H}$ onto ${\cal H}\langle A \rangle$
multiplies functions by $\chi_A$, i.e., cuts off functions on $A$.

Let ${{\mathfrak B}(\cal H)}$ be the normed algebra of bounded operators in ${\cal H}$.
With a scalar eikonal $\tau_\sigma$ one associates an operator
$\check \tau_\sigma \in {{\mathfrak B}(\cal H)}$, which acts in ${\cal H}$ by
$$\left(\check \tau_\sigma y\right)(x)\,:=\,\tau_\sigma(x)\,y(x)\,,
\qquad x \in \Omega\,$$ and is bounded since $\Omega$ is compact.
Moreover, one has \begin{equation}\label{||tau||} \|\check \tau_\sigma\| =
\max_{x \in \Omega}|\tau_\sigma(x)| =\|\tau_\sigma\|_{C(\Omega)}
\leqslant {\rm diam\,}\Omega\,. \end{equation} With a slight abuse of terms,
we also call $\check \tau_\sigma$ an {\it eikonal}.

Each eikonal is a self-adjoint positive operator, which is
represented by the Spectral Theorem in the well-known form.
\begin{proposition}\label{Prop 3}
The representation \begin{equation}\label{spect repr tau} \check
\tau_\sigma\,=\,\int_0^\infty s\,dX^s_\sigma\end{equation} is valid, where
the projections $X^s_\sigma:=X_{\Omega^s[\sigma]}$ cut off
functions on the metric neighborhoods of $\sigma$.
\end{proposition}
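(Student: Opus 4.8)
The plan is to read \eref{spect repr tau} as the spectral representation of the multiplication operator $\check\tau_\sigma$ and to verify that the family $\{X^s_\sigma\}_{s\geqslant 0}$ is exactly its resolution of identity. Since $\check\tau_\sigma$ is bounded and self-adjoint, the Spectral Theorem supplies a \emph{unique} (left-continuous) resolution of identity $\{E^s\}$ obeying $\check\tau_\sigma=\int_0^\infty s\,dE^s$; it therefore suffices to show that $\{X^s_\sigma\}$ enjoys the defining properties of such a resolution and reproduces $\check\tau_\sigma$ in the sense of the spectral integral.

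First I would record the structural properties, using $\Omega^s[\sigma]=\{x\in\Omega\,|\,\,\tau_\sigma(x)<s\}$. Each $X^s_\sigma=X_{\Omega^s[\sigma]}$ is an orthogonal projection, being multiplication by the idempotent $\chi_{\Omega^s[\sigma]}$. The neighborhoods are nested, whence $X^s_\sigma\leqslant X^t_\sigma$ for $s\leqslant t$ (monotonicity). As $\tau_\sigma\geqslant 0$, one has $\Omega^s[\sigma]=\emptyset$, hence $X^s_\sigma=0$, for $s\leqslant 0$; and by \eref{Omega r[A]= Omega} one has $\Omega^s[\sigma]=\Omega$, hence $X^s_\sigma=I$, once $s>{\rm diam\,}\Omega$, so the integral is effectively taken over the compact interval $[0,{\rm diam\,}\Omega]$. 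Finally, $\Omega^s[\sigma]=\bigcup_{t<s}\Omega^t[\sigma]$ gives the strong left-continuity $X^{s-0}_\sigma=X^s_\sigma$ required by the convention.

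The substantive step is the identity $\int_0^\infty s\,dX^s_\sigma=\check\tau_\sigma$, which I would establish in quadratic-form terms and then polarize. For $u\in{\cal H}$, let $\mu_u$ be the image (pushforward) of the finite measure $|u|^2\,dx$ under the map $\tau_\sigma:\Omega\to[0,\infty)$. Then
\[ (X^s_\sigma u,u)_{\cal H}=\int_{\Omega^s[\sigma]}|u|^2\,dx=\mu_u\big([0,s)\big), \]
so that $d(X^s_\sigma u,u)_{\cal H}=d\mu_u$, and the change-of-variables (pushforward) formula yields
\[ \int_0^\infty s\,d(X^s_\sigma u,u)_{\cal H}=\int_0^\infty s\,d\mu_u(s)=\int_\Omega\tau_\sigma(x)\,|u(x)|^2\,dx=(\check\tau_\sigma u,u)_{\cal H}. \]
Polarization upgrades this to the bilinear form $(X^s_\sigma\,\cdot\,,\,\cdot\,)_{\cal H}$ and hence to the operator identity \eref{spect repr tau}.

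The point demanding care is the passage from the strict-inequality neighborhoods to an \emph{unambiguous} resolution of identity: a priori the choice between $\{\tau_\sigma<s\}$ and $\{\tau_\sigma\leqslant s\}$ could matter at the level sets $\{\tau_\sigma=s\}$, i.e. $\check\tau_\sigma$ could carry atoms (eigenvalues). Here \eref{nabla tau=1} rules this out: since $|\nabla\tau_\sigma|=1$ a.e., $\tau_\sigma$ cannot be constant on any set of positive $n$-dimensional measure, so every level set $\{\tau_\sigma=s\}$ is null, each $\mu_u$ is non-atomic, the left- and right-continuous versions of $\{X^s_\sigma\}$ coincide as operators, and the base $\sigma=\{\tau_\sigma=0\}$ (Proposition \ref{Prop 1}) contributes nothing at $s=0$. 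This legitimizes both the pushforward computation above and the identification of $\{X^s_\sigma\}$ with the spectral family, completing the proof.
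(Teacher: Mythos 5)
Your proof is correct. The paper gives no argument for Proposition \ref{Prop 3} at all --- it is presented as the ``well-known'' spectral representation of a multiplication operator --- and your verification (pushforward of $|u|^2\,dx$ under $\tau_\sigma$, polarization, and the observation via \eref{nabla tau=1} that the level sets of $\tau_\sigma$ are null so the open/closed-neighborhood ambiguity is harmless) is exactly the standard argument the authors are implicitly invoking.
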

Note that the integration interval is in fact $0\leqslant s
\leqslant \|\check \tau_\sigma\|$.

The eikonals corresponding to different bases do commute. This
follows from commutation of $X^s_\sigma$ and $X^{s'}_{\sigma'}$
for all $\sigma, \sigma' \in {\cal R}(\Gamma)$ and $s, s' \geqslant 0$.

\subsection{Solenoidal operator eikonals}
Here we introduce an analog of $\check \tau_\sigma$ relevant to
electrodynamics.

\subsubsection*{3d-manifold}
Now, let ${\rm dim\,}\,\Omega=3$. Also, let $\Omega$ be orientable
and endowed with a volume 3-form $dv$. On such a manifold, the
intrinsic operations of vector analysis
$\wedge$ (vector product), $\nabla,\,{\rm div},\,{\rm curl}$, are well defined on smooth
functions and vector fields (sections of the tangent bundle
$T\Omega$): see, e.g., \cite{Sch}.

\subsubsection*{Solenoidal spaces}
The class of smooth fields $ \vec C^\infty(\Omega)$ is dense in
the space $\vec {\cal H}$ of square-summable fields with the product
$$(a, b)_{\vec {\cal H}}= \int_\Omega a(x)\cdot b(x) \,dx\,,$$ where $\cdot$
is the inner product in $T\Omega_x$. This space contains the
(sub)spaces
$${\cal J}:=\{y \in \vec {\cal H}\,|\,\,{\rm div\,} y=0\,\,{\rm in\,} \Omega\}\,,
\,\,\,\,{\cal C}:=\{{\rm curl\,} h \in \vec {\cal H}\,|\,\,h, {\rm curl\,} h \in \vec
{\cal H}\,\}\,\subset {\cal J}$$ of solenoidal fields and curls. Note that
the smooth classes ${\cal J} \cap \vec C^\infty(\Omega)$ and ${\cal C} \cap
\vec C^\infty(\Omega)$ are dense in ${\cal J}$ and ${\cal C}$ respectively.

Recall the well-known decompositions \begin{equation} \vec {\cal H}= {\cal G}_0\oplus {\cal J}
\,=\,{\cal G}_0 \oplus {\cal C}\oplus {\cal D}\,, \end{equation}\label{Helmholtz} where
${\cal G}_0:=\{\nabla q\,|\,\,q \in H^1_0(\Omega)\}$ is the space of
potential fields, ${\cal D}:=\{y \in {\cal J}\,|\\ \,\,{\rm curl\,} h=0,\,\,\nu
\wedge y=0\,\,{\rm on}\,\,\Gamma\}$ is a finite-dimensional
subspace of harmonic Dirichlet fields \cite{Sch}.

For an $A \subset \Omega$ we denote by \begin{align*} & \vec
{\cal H}\langle A \rangle:=\{\chi_A y\,|\,\,y \in \vec {\cal H}\},\,\,\,{\cal J}\langle A
\rangle:=\overline{\{y \in {\cal J}\,|\,\,{\rm supp\,}y \subset
A\}},\\
& {\cal C}\langle A \rangle:=\overline{\{{\rm curl\,} h\,|\,\,h \in \vec
C^\infty(\Omega),\,\,{\rm supp\,}h \subset A\}}\end{align*} (the
closure in $\vec {\cal H}$) the subspaces of fields supported in $A$.

\subsubsection*{Eikonals $\varepsilon_\sigma$}
Fix a $\sigma \in {\cal R}(\Gamma)$ and take $A=\Omega^s[\sigma]$. Let
$Y^s_\sigma$ be the projection in ${\cal C}$ onto the subspace
${\cal C}\langle\Omega^s[\sigma]\rangle$. Note that the action of $Y^s_\sigma$
is not reduced to cutting off fields on $\Omega^s[\sigma]$, it
acts in more complicated way (see \cite{BIP'07}, \cite{BD_1}).

By analogy with (\ref{spect repr
tau}), define 
a {\it solenoidal operator eikonal} \begin{equation}\label{spec repr eps}
\varepsilon_\sigma\,:=\,\int_0^\infty s\,d Y^s_\sigma\,,\end{equation} which is an
operator in ${\cal C}$. We omit a simple proof of the following result.
\begin{proposition}\label{Prop 4}
The eikonal $\varepsilon_\sigma$ is a bounded self-adjoint positive
operator, the equalities \begin{equation}\label{||eps||}
\|\varepsilon_\sigma\|\,=\,\|\tau_\sigma\|_{C(\Omega)}\,\overset{(\ref{||tau||})}=\,\|\check
\tau_\sigma\| \end{equation} being valid.
\end{proposition}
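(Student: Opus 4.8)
The plan is to establish the three asserted properties of $\varepsilon_\sigma$ directly from its spectral representation \eref{spec repr eps}, exploiting the fact that $\{Y^s_\sigma\}_{s\geqslant 0}$ is a spectral family. First I would verify that the family of projections $Y^s_\sigma$ is indeed a resolution of the identity on $\mathcal C$: it is increasing in $s$ (since $s\leqslant s'$ gives $\Omega^s[\sigma]\subset\Omega^{s'}[\sigma]$, hence $\mathcal C\langle\Omega^s[\sigma]\rangle\subset\mathcal C\langle\Omega^{s'}[\sigma]\rangle$), it is right-continuous, it vanishes for $s=0$, and by \eref{Omega r[A]= Omega} it reaches the full projection $I$ once $s>{\rm diam}\,\Omega$. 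Given this, the Spectral Theorem guarantees that $\varepsilon_\sigma=\int_0^\infty s\,dY^s_\sigma$ defines a self-adjoint operator; positivity is immediate because the integrand $s$ is nonnegative on the support $[0,\infty)$ of the measure $dY^s_\sigma$. Boundedness follows from the fact that the spectral family is constant for $s>{\rm diam}\,\Omega$, so the integral is really over the compact interval $[0,{\rm diam}\,\Omega]$.

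\smallskip

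The remaining and principal task is the norm identity \eref{||eps||}. For a self-adjoint operator given by a spectral family, the norm equals the supremum of $|s|$ over the support of $dY^s_\sigma$, i.e. the smallest $s_*$ with $Y^{s_*}_\sigma=I$ together with the largest $s$ at which $Y^s_\sigma$ still grows. Concretely, I would show
\begin{equation*}
\|\varepsilon_\sigma\|=\inf\{s\geqslant 0\,|\,\,Y^s_\sigma=I\},
\end{equation*}
and then identify this quantity geometrically. The key observation is that $Y^s_\sigma=I$ precisely when $\mathcal C\langle\Omega^s[\sigma]\rangle=\mathcal C$, that is, when every curl in $\mathcal C$ can be approximated by curls of fields supported in $\Omega^s[\sigma]$; this happens exactly when $\Omega^s[\sigma]=\Omega$ (up to a null set), which by the definition of the metric neighborhood occurs iff $s\geqslant\max_{x\in\Omega}\tau_\sigma(x)=\|\tau_\sigma\|_{C(\Omega)}$. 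Matching the scalar formula \eref{||tau||} then yields \eref{||eps||}.

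\smallskip

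I expect the main obstacle to be the geometric characterization of when $Y^s_\sigma=I$, i.e. the claim that $\mathcal C\langle\Omega^s[\sigma]\rangle$ fills out all of $\mathcal C$ if and only if $\Omega^s[\sigma]$ exhausts $\Omega$. One direction is clear: if $\Omega^s[\sigma]=\Omega$ then trivially every curl is supported there. The delicate direction is the converse, which requires ruling out that curls supported in a proper subdomain could be dense in $\mathcal C$; this is where the nontrivial structure of the solenoidal space enters, and it is presumably the content of the "basic lemmas" deferred to the Appendix. Since the proposition is stated with the remark that its proof is simple and omitted, I would in fact lean on exactly such an Appendix lemma (the analog of the scalar support-exhaustion property for the projections $Y^s_\sigma$) and reduce \eref{||eps||} to it, keeping the self-adjointness, positivity, and boundedness arguments entirely elementary via the spectral calculus.
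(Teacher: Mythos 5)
The paper gives no proof of Proposition \ref{Prop 4} (it is explicitly omitted as "simple"), so there is no line-by-line comparison to make; your spectral-calculus plan is certainly the intended argument. The reductions of self-adjointness, positivity and boundedness to properties of the increasing family $\{Y^s_\sigma\}$, and the identification $\|\varepsilon_\sigma\|=\inf\{s\geqslant 0\,:\,Y^s_\sigma=I\}$, are all correct.

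The one substantive step — that $Y^s_\sigma\neq I$ whenever $\Omega^s[\sigma]\neq\Omega$ — you do not prove but instead defer to a purported Appendix lemma. No such lemma exists: the Appendix proves Lemmas \ref{Lemma 1}, \ref{Eff} and \ref{hatpihom}, which concern compactness of $\varepsilon_\sigma-\check\tau_\sigma$ and the homomorphism $\dot\pi$, not support exhaustion. Fortunately the missing step needs no machinery. By definition ${\cal C}\langle A\rangle$ is the $\vec{\cal H}$-closure of curls of smooth fields supported in $A$, so every element of ${\cal C}\langle\Omega^s[\sigma]\rangle$ vanishes a.e.\ outside $\overline{\Omega^s[\sigma]}\subset\{\tau_\sigma\leqslant s\}$. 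If $s<\max_\Omega\tau_\sigma$, the set $\{\tau_\sigma>s\}$ is open and nonempty, hence contains an interior coordinate ball; choosing $h\in\vec C^\infty$ supported in that ball with ${\rm curl\,}h\neq 0$ yields a nonzero $y={\rm curl\,}h\in{\cal C}$ whose support is disjoint from $\overline{\Omega^t[\sigma]}$ for all $t\leqslant s$, so that $Y^t_\sigma y=0$ for $t\leqslant s$ and consequently $(\varepsilon_\sigma y,y)=\int_{(s,\infty)}t\,d\|Y^t_\sigma y\|^2\geqslant s\|y\|^2$. This gives the lower bound $\|\varepsilon_\sigma\|\geqslant\|\tau_\sigma\|_{C(\Omega)}$ directly; the upper bound is your easy direction, modulo the remark that ${\cal C}\langle\Omega\rangle={\cal C}$ rests on the density assertions of Section 2.3. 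With that paragraph inserted in place of the appeal to the Appendix, your proof of \eref{||eps||} (and with it of Proposition \ref{Prop 4}) is complete.
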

An important fact is that, in contrast to the cutting off
projections $X^s_\sigma$, the projections $Y^s_\sigma$ and
$Y^{s'}_{\sigma'}$ do not commute in general. As a consequence,
the eikonals $\varepsilon_\sigma$ and $\varepsilon_{\sigma'}$ also {\it do not
commute}.
\smallskip

Multiplying a field $h \in {\cal C}$ by a bounded function $\varphi$,
one takes the field out of the subspace of curls: $\varphi h \in
\vec {\cal H}$ but $\varphi h \not\in {\cal C}$ in general. However, a map
$h \mapsto \varphi h$ is a well defined bounded operator from ${\cal C}$ to $\vec {\cal H}$. 
For instance, understanding  $\check \tau_\sigma$ as an operator,
which multiplies vector fields by the scalar eikonal
$\tau_\sigma$, we have $ \check \tau_\sigma \in {\mathfrak B}({\cal C}; \vec
{\cal H})$.

The following result is of crucial character for future
application to inverse problems. By ${\mathfrak K}({\cal C}; \vec {\cal H}) \subset
{\mathfrak B}({\cal C}; \vec {\cal H})$ we denote the set of compact operators.
\begin{lemma}\label{Lemma 1}
For any $\sigma \subset \Gamma$ the relation $\varepsilon_\sigma - \check
\tau_\sigma \in {\mathfrak K}({\cal C}; \vec {\cal H})$ holds.
\end{lemma}
In the proof (see Appendix) we use the technique developed in~\cite{DMN}.

\section{Algebras}
\subsection{Handbook}
We begin with minimal information about algebras: for detail see,
e.g., \cite{Mur}, \cite{Nai}. The abbreviations BA and CBA mean a
Banach and commutative Banach algebra respectively.
\smallskip

{\bf 1.}\,\,\, A BA is a (complex or real) Banach space ${\cal A}$
equipped with the multiplication operation $ab$ satisfying
$\|ab\|\leq \|a\|\,\|b\|\,\,\, a,b \in {\cal A}$. We deal with
algebras with the unit $e \in {\cal A}:\,\, ea=ae=a$.

A BA ${\cal A}$ is called commutative if $ab=ba$ for all $a,b \in {\cal A}$.
{\it Example}: the algebra $C(X)$ of continuous functions on a
topological space $X$ with the norm
$\|a\|=\sup_{X}|a(\,\cdot\,)|$. The subalgebras of $C(X)$ are
called function algebras.

A CBA is said to be uniform if $\|a^2\|=\|a\|^2$ holds. All
function algebras are uniform.
\smallskip

{\bf 2.}\,\,\, Let ${\cal A}^\prime$ be the space of linear
continuous functionals on a CBA ${\cal A}$. A functional $\delta
\in {\cal A}^\prime$ is called multiplicative if $\delta
(ab)=\delta (a)\delta (b)$. {\it Example}: a Dirac measure
$\delta_{x_0}\in C^\prime (X): \,\delta_{x_0} (a)=a(x_0)
\,\,\,(x_0 \in X)$. Each multiplicative functional is of the norm
1.

The set  of multiplicative functionals endowed with $\ast$-weak
topology (in ${\cal A}^\prime$) is called a {\it spectrum} of
$\cal A$ and denoted by $\widehat{\cal A}$. A spectrum is a compact
Hausdorff space.
\smallskip

{\bf 3.}\,\,\, The {\it Gelfand transform} acts from a CBA $\cal
A$ to $C(\widehat{\cal A})$ by the rule $G\!: a \mapsto
a(\cdot),\,a(\delta):=\delta(a),\, \delta \in \widehat{\cal A}$. It
represents $\cal A$ as a function algebra. The passage from $\cal
A$ to $G{\cal A} \subset C(\widehat{\cal A})$ is referred to as a
geometrization of $\cal A$.
\begin{theorem}{\it(I.M.Gelfand)}
If ${\cal A}$ is a uniform CBA, then $G$ is an isometric
isomorphism from ${\cal A}$ onto $G{{\cal A}}$, i.e., $G(\alpha a
+ \beta b + cd)=\alpha Ga + \beta Gb + Gc\,Gd$ and
$\|Ga\|_{C(\widehat{\cal A})}=\|a\|_{\cal A}$ holds for all $a,b,c,d
\in {{\cal A}}$ and numbers $\alpha, \beta$.
\end{theorem}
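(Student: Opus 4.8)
The plan is to separate the purely algebraic content of the statement from its analytic core. The homomorphism identity $G(\alpha a + \beta b + cd) = \alpha Ga + \beta Gb + Gc\,Gd$ is immediate: evaluating both sides at an arbitrary character $\delta \in \widehat{\cal A}$ and using that $\delta$ is linear and multiplicative gives $\delta(\alpha a + \beta b + cd) = \alpha\delta(a) + \beta\delta(b) + \delta(c)\delta(d)$, which is exactly the pointwise value of the right-hand side. Thus $G$ is an algebra homomorphism, and the whole weight of the theorem lies in the norm identity $\|Ga\|_{C(\widehat{\cal A})} = \|a\|_{\cal A}$, from which injectivity (and hence the isomorphism onto $G{\cal A}$) follows at once, since $Ga = 0$ forces $\|a\| = 0$ and so $a = 0$.

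First I would reduce the norm identity to the spectral radius. By definition $\|Ga\|_{C(\widehat{\cal A})} = \sup_{\delta \in \widehat{\cal A}} |\delta(a)|$, and the key structural fact is that this supremum equals the spectral radius $r(a) = \sup\{|\lambda| : a - \lambda e \text{ is not invertible}\}$. Concretely, I would prove that the range $\{\delta(a) : \delta \in \widehat{\cal A}\}$ coincides with the spectrum $\sigma(a)$: the inclusion into $\sigma(a)$ holds because $\ker\delta$ is a proper ideal containing $a - \delta(a)e$, so this element cannot be invertible; the reverse inclusion uses the correspondence between maximal ideals and characters, namely that every maximal ideal $M$ gives ${\cal A}/M \cong {\mathbb C}$ (Gelfand--Mazur), so a non-invertible $a - \lambda e$ lies in some maximal ideal $M = \ker\delta$, whence $\delta(a) = \lambda$. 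Consequently $\|Ga\|_{C(\widehat{\cal A})} = r(a)$.

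It then remains to identify $r(a)$ with $\|a\|$ using the uniform hypothesis. The Gelfand--Beurling spectral radius formula gives $r(a) = \lim_{n\to\infty}\|a^n\|^{1/n}$; I would evaluate this limit along the subsequence $n = 2^k$, where the uniform condition $\|a^2\| = \|a\|^2$ yields, by an immediate induction, $\|a^{2^k}\| = \|a\|^{2^k}$ and hence $\|a^{2^k}\|^{1/2^k} = \|a\|$ for every $k$. Therefore $r(a) = \|a\|$, which combined with the previous step gives the desired isometry and closes the argument.

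The main obstacle is the analytic machinery underlying the second paragraph: establishing that the range of the Gelfand transform is exactly the spectrum, together with the spectral radius formula. Both rest on complex function theory --- non-emptiness and compactness of $\sigma(a)$ via analyticity of the resolvent $\lambda \mapsto (a - \lambda e)^{-1}$ and Liouville's theorem, and the Gelfand--Mazur theorem for the maximal-ideal/character dictionary. Since the application here is to the real algebra $C(\Omega)$, I would carry this out on the complexification ${\cal A}_{\mathbb C} = {\cal A} \oplus i{\cal A}$ and then restrict, the uniform condition and the norm computation descending without change. By contrast, the uniform-algebra step and the homomorphism identity are elementary once this machinery is in place.
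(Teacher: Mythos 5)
Your argument is the standard proof of the Gelfand representation theorem --- homomorphism property by evaluation at characters, reduction of the sup-norm of $Ga$ to the spectral radius via the character/maximal-ideal correspondence, and then $r(a)=\|a\|$ from the Beurling formula along the subsequence $n=2^k$ using uniformity --- which is precisely the argument in the references \cite{Mur}, \cite{Nai} that the paper cites in lieu of a proof. Your closing remark on complexification correctly addresses the only real subtlety (the theorem as literally stated can fail for arbitrary real uniform CBAs, e.g.\ $\mathbb{C}$ viewed as a real algebra), and the fix is adequate for the strictly real function algebras and algebras of commuting self-adjoint operators actually used in the paper.
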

\smallskip

{\bf 4.}\,\,\,If two CBA ${\cal A}$ and ${\cal B}$ are
isometrically isomorphic (we write ${\cal A} \overset{\rm isom}=
{\cal B}$) via an isometry $j$, then  the dual isometry $j^\ast:
{{\cal B}}^\prime \to {{\cal A}}^\prime$ provides a homeomorphism
of their spectra:
$j^\ast \widehat{\cal B} = \widehat{\cal A}$. 
Also, one has $G{\cal A} \overset{\rm isom}= G{\cal B}$ via the map $j_\sharp:
Ga \mapsto (Ga)\circ j^*$.

\smallskip

{\bf 5.}\,\,\,Let ${\cal A}(X) \subset C(X)$ be a closed function
algebra. For each $x_0 \in X$, the Dirac measure $\delta_{x_0}$
belongs to $\widehat{{\cal A}(X)}$. Therefore, the map $x_0 \mapsto
\delta_{x_0}$ provides a canonical embedding $X \subset \widehat{{\cal
A} (X)}$.

If $X$ is a compact Hausdorff space, then the Dirac measures
exhaust the spectrum of $C(X)$, whereas the map $x_0 \mapsto
\delta_{x_0}$ provides a canonical homeomorphism from $X$ onto
$\widehat{C(X)}$ (we write $X \overset{\rm hom}= \widehat{C(X)}$). Also, one has
$C(X)\overset{\rm isom}= G C(X)$.
\medskip

The trick, which is used in inverse problems for reconstruction of
manifolds, is the following. Assume that we are given with an
"abstract" CBA ${\mathfrak A}$, which is known to be isometrically
isomorphic to $C(X)$, but neither the (compact Hausdorff) space
$X$ nor the isometry map is given. Then, by determining the
spectrum $\widehat {\mathfrak A}$, we in fact recover the space $X$ up to a
homeomorphism: $X \overset{\rm hom}= \widehat{C(X)}\overset{\rm hom}= \widehat{\mathfrak A}$, whereas
$C(X)\overset{\rm isom}= G C(X)\overset{\rm isom}= G{\mathfrak A}$ does hold. Thus, ${\mathfrak A}$ provides a
homeomorphic copy $\widehat {\mathfrak A}$ of the space $X$ and a concrete
isometric copy $C(\widehat {\mathfrak A})$ of the algebra $C(X)$.
\smallskip

{\bf 6.}\,\,\,A $C^\ast$-algebra is a BA endowed with an {\rm
involution} $(^\ast)$ satisfying $(\alpha a + \beta b + cd)^\ast =
{\bar \alpha}a^\ast + {\bar \beta}b^\ast + d^\ast c^\ast$ and
$\|a^\ast a\|=\|a\|^2$ for all elements $a, b, c, d$ and numbers
$\alpha, \beta$. In the real case, we have just $\bar \alpha =
\alpha$. {\it Example}: the algebra ${\mathfrak B}({\cal H})$ of
bounded operators in a Hilbert space ${\cal H}$ with the operator
norm and conjugation.
\smallskip

{\bf 7.}\,\,\,Let ${\cal I}$ be a norm-closed two-side ideal in a
C*-algebra ${\cal A}$. Then $a\sim b \Leftrightarrow a-b \in {\cal I}$ is an
equivalence. The factor ${\cal A}\slash {\cal I}$ is endowed with a
C*-structure via the projection $\pi: {\cal A} \to {\cal A} \slash {\cal I}$
(element $a$ $\mapsto$ equivalence class of $a$). Namely, one sets
$\|\pi a\|:=\inf \{\|b\|_{\cal A}\,|\,\,b \in \pi a\}, \,\,\alpha \pi
a+\beta \pi b + \pi c \,\pi d :=\pi(\alpha a+\beta b +cd),
\,\,(\pi a)^*:=\pi(a^*)$ for elements $a,b,c,d \in {\cal A}$ and
numbers $\alpha, \beta$. Thus, $\pi$ is a homomorphism of
C*-algebras.

\subsection{Algebra ${{\mathfrak T}}$}
Now let $X$ be our Riemannian manifold $\Omega$, which is
definitely a compact Hausdorff space. Let $C(\Omega)$ be the CBA
of real continuous functions on $\Omega$.

The eikonals $\tau_\sigma$ generate $C(\Omega)$ in the following
sense. For a Banach algebra ${\cal A}$ and a subset $S \subset {\cal A}$, by
$\vee S$ we denote the {\it minimal norm-closed subalgebra of
${\cal A}$, which contains $S$}. The following fact is a
straightforward consequence of the separating properties of
eikonals (Proposition \ref{Prop 1}) and the Stone-Weierstrass
theorem \cite{Nai}.
\begin{proposition}\label{Prop 5}
The equality $\vee\{\tau_\sigma\,|\,\,\sigma \in {\cal R}(\Gamma)\}=C(\Omega)$
is valid.
\end{proposition}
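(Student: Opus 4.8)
The plan is to apply the Stone--Weierstrass theorem to the algebra $\mathfrak{A} := \vee\{\tau_\sigma \mid \sigma \in \mathcal{R}(\Gamma)\}$, which is by definition a closed subalgebra of $C(\Omega)$. Since $\Omega$ is compact Hausdorff, it suffices to verify the three hypotheses of the theorem (in its real form): that $\mathfrak{A}$ contains the constants, that $\mathfrak{A}$ separates points of $\Omega$, and that $\mathfrak{A}$ is closed under the algebra operations (the last being automatic from the definition of $\vee$). Establishing these will force $\mathfrak{A} = C(\Omega)$.

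First I would check point-separation, which is immediate from Proposition~\ref{Prop 1}: for any distinct $x, x' \in \Omega$ there exists $\sigma \in \mathcal{R}(\Gamma)$ with $\tau_\sigma(x) \neq \tau_\sigma(x')$, so the generating eikonals already separate points, and hence so does the larger algebra $\mathfrak{A}$ containing them. Next I would address the constants. The eikonals themselves do not contain a nonzero constant, so I must recover $\mathbf{1}$ from products and limits. The key device is the first assertion of Proposition~\ref{Prop 1}: for each $x$ there is some $\sigma$ with $\tau_\sigma(x) \neq 0$, which rules out a common zero but does not by itself produce the unit. The cleanest route is to invoke the version of Stone--Weierstrass for algebras \emph{without} a presumed unit: a closed subalgebra of $C(\Omega)$ that separates points and vanishes nowhere (i.e., for each $x$ some element is nonzero at $x$) equals all of $C(\Omega)$ precisely when it contains the constants, and in the non-unital formulation it equals the ideal of functions vanishing on the common zero set, which here is empty. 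Since Proposition~\ref{Prop 1} guarantees no common zero, the non-unital Stone--Weierstrass theorem yields $\mathfrak{A} = C(\Omega)$ directly, constants included.

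Carrying this out concretely, I would note that $\mathfrak{A}$ is a real subalgebra (it is closed under real-linear combinations and products by construction of $\vee$), it is norm-closed, it separates points, and it has no common zero. These are exactly the hypotheses of the real non-unital Stone--Weierstrass theorem for a compact Hausdorff space, whose conclusion is $\mathfrak{A} = C(\Omega)$. I expect the main subtlety to be the unit: one must be careful not to silently assume $\mathbf{1} \in \mathfrak{A}$, since the raw generators do not provide it, and instead to lean on the vanishing-nowhere condition supplied by Proposition~\ref{Prop 1} together with the appropriate non-unital formulation. No delicate estimate or Riemannian-geometric input is needed beyond the separating and nonvanishing properties already recorded; everything else is a routine appeal to the cited functional-analytic machinery.
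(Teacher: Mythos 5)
Your proof is correct and follows essentially the same route as the paper, which simply records Proposition~\ref{Prop 5} as a direct consequence of the separating properties in Proposition~\ref{Prop 1} together with the Stone--Weierstrass theorem. Your extra care with the unit --- invoking the non-unital (vanishing-nowhere) form of Stone--Weierstrass rather than assuming $\mathbf{1}$ lies in the generated algebra --- is a sensible filling-in of a detail the paper leaves implicit, but it does not change the argument.
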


Recall that ${\cal H}=L_2(\Omega)$, ${{\mathfrak B}(\cal H)}$ is the bounded operator
algebra, $\check \tau_\sigma \in {{\mathfrak B}(\cal H)}$ is the multiplication by
$\tau_\sigma$ (see sec 2.2). Introduce the (sub)algebra
\begin{equation}\label{Algebra E (def)}
{\mathfrak T}\,:=\,\vee\{\check \tau_\sigma\,|\,\,\sigma \in
{\cal R}(\Gamma)\}\,\subset\,{{\mathfrak B}(\cal H)} \end{equation} generated by scalar operator
eikonals. As easily follows from (\ref{||tau||}) and Proposition
\ref{Prop 5}, the map $C(\Omega)\ni \tau_\sigma \mapsto \check
\tau_\sigma \in {\mathfrak T}$, which connects the generators, is extended
to an isometric isomorphism of CBA $C(\Omega)$ and ${\mathfrak T}$. With
regard to items {\bf 4, 5} of sec 3.1, the isometry implies
\begin{equation}\label{Omega=Omega E}\Omega \,\overset{\rm hom}=\, \widehat{C(\Omega)}\,\overset{\rm hom}=\,
\widehat{{\mathfrak T}}\,.\end{equation}

\subsubsection*{On reconstruction}
Here we prepare a fragment of the procedure, which will be used
for solving inverse problems.

Assume that we are given with a Hilbert space $\tilde {\cal H}=U{\cal H}$,
where $U$ is a unitary operator. Also assume that we know the map
\begin{equation}\label{map sigma tilde X}{\cal R}(\Gamma) \times [0,T]\,\ni \,\{\sigma,
s\}\,\mapsto \tilde X^s_\sigma \,\in\, {\mathfrak B}(\tilde {\cal H}) \qquad
(T>{\rm diam\,}\Omega)\,,\end{equation} where $\tilde X^s_\sigma := U
X^s_\sigma U^*$, but the operator $U: {\cal H} \to \tilde {\cal H}$ {\it is
unknown} \footnote{in other words, we are given with a
representation of the projection family $\{X^s_\sigma\}_{\sigma
\in {\cal R}(\Gamma)}$ in a space $\tilde {\cal H}$}. Show that this map
determines the manifold $\Omega$ up to isometry. Indeed,

\begin{enumerate}
\item using the map, one can construct the operators $$
\tau^\prime_\sigma:=\int_0^Ts\,d \tilde X^s_\sigma =\int_0^Ts\,d\,
[U X^s_\sigma U^*] \overset{(\ref{spect repr tau})}= U\check
\tau_\sigma U^*$$ \item determine the algebra $\tilde
{\mathfrak T}=\vee\{\tau^\prime_\sigma\,|\,\,\sigma \in {\cal R}(\Gamma)\} \subset
{\mathfrak B}(\tilde{\cal H})\,,$ which is isometric to ${\mathfrak T} \subset {{\mathfrak B}(\cal H)}$ (via
the {\it unknown $U$})

\item applying the Gelfand transform to $\tilde {\mathfrak T}$, find its
spectrum $\widehat{\tilde {\mathfrak T}}=:\tilde \Omega$ and the functions
$\tilde \tau_\sigma:=G \tau^\prime_\sigma$ on $\tilde \Omega$.

\end{enumerate}

Since $\tilde {\mathfrak T} \overset{\rm isom}= {\mathfrak T}$, one has $\tilde \Omega :=\widehat{\tilde
{\mathfrak T}}\overset{\rm hom}= \widehat{{\mathfrak T}}\overset{\rm hom}= \Omega$ (see (\ref{Omega=Omega E})).
Hence, we get a homeomorphic copy $\tilde \Omega$ of the original
$\Omega$ along with the images $\tilde \tau_\sigma$ of the
original eikonals $\tau_\sigma$ on $\Omega$ \footnote{by
construction, $\tilde \tau_\sigma$ turns out to be a pull-back
function of $\tau_\sigma$ via the homeomorphism $\tilde \Omega \to
\Omega$}. Thus, we have a version of the map 
(\ref{map sigma to tau sigma}), which determines the copy $\tilde
\Omega$ (see Proposition \ref{Prop 2}).

Summarizing, we arrive at the following assertion.
\begin{proposition}\label{Prop 6}
The map $(\ref{map sigma tilde X})$ determines the copy $\tilde
\Omega$ and, hence, determines $\Omega$ up to isometry of
Riemannian manifolds.
\end{proposition}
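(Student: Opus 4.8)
The plan is to realize the three-step construction sketched just before the statement and then reduce everything to Proposition \ref{Prop 2}. First, from the given map $\{\sigma,s\}\mapsto\tilde X^s_\sigma$ I would form the operators $\tau'_\sigma:=\int_0^T s\,d\tilde X^s_\sigma$. Because conjugation by $U$ commutes with the functional calculus and $T>{\rm diam\,}\Omega\geqslant\|\check\tau_\sigma\|$ by \eref{||tau||}, the spectral representation \eref{spect repr tau} yields $\tau'_\sigma=U\check\tau_\sigma U^*$; thus the unknown $U$ never appears explicitly, and $\tau'_\sigma$ is computed purely from the data. Setting $\tilde{\mathfrak T}:=\vee\{\tau'_\sigma\,|\,\,\sigma\in{\cal R}(\Gamma)\}$, I observe that $a\mapsto UaU^*$ is an isometric $*$-isomorphism of ${{\mathfrak B}(\cal H)}$ onto ${\mathfrak B}(\tilde{\cal H})$ carrying each generator $\check\tau_\sigma$ to $\tau'_\sigma$, hence mapping ${\mathfrak T}$ onto $\tilde{\mathfrak T}$; therefore $\tilde{\mathfrak T}\overset{\rm isom}={\mathfrak T}$.

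The second step is geometrization. Since ${\mathfrak T}\overset{\rm isom}=C(\Omega)$ (the isometry established in sec 3.2) and $\tilde{\mathfrak T}\overset{\rm isom}={\mathfrak T}$, the algebra $\tilde{\mathfrak T}$ is a uniform CBA, so by the Gelfand theorem its Gelfand transform $G$ is an isometric isomorphism onto $G\tilde{\mathfrak T}\subset C(\widehat{\tilde{\mathfrak T}})$. Put $\tilde\Omega:=\widehat{\tilde{\mathfrak T}}$ and $\tilde\tau_\sigma:=G\tau'_\sigma\in C(\tilde\Omega)$. By item {\bf 4} of the handbook, the chain of isometries $C(\Omega)\to{\mathfrak T}\to\tilde{\mathfrak T}$ induces, via the dual $\ast$-weak maps, a homeomorphism $\beta\colon\tilde\Omega\to\widehat{C(\Omega)}$, whence $\tilde\Omega\overset{\rm hom}=\Omega$ by \eref{Omega=Omega E}.

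The crucial bookkeeping step is to verify that $\tilde\tau_\sigma$ is exactly the pull-back of $\tau_\sigma$ under $\beta$. For a character $\delta\in\tilde\Omega$, its image $\beta(\delta)$ is, by item {\bf 5}, the Dirac measure at some $x\in\Omega$ with $\delta(U\check\tau_\sigma U^*)=\tau_\sigma(x)$ for every $\sigma$. Hence $\tilde\tau_\sigma(\delta)=\delta(\tau'_\sigma)=\tau_\sigma(\beta(\delta))$, i.e. $\tilde\tau_\sigma=\tau_\sigma\circ\beta$. Thus, writing $\eta:=\beta^{-1}\colon\Omega\to\tilde\Omega$, the assignment $\sigma\mapsto\tilde\tau_\sigma=\tau_\sigma\circ\eta^{-1}$ is precisely a realization of \eref{map sigma to tau sigma} on the homeomorphic copy $\tilde\Omega$, built from the data alone.

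Having produced both $\tilde\Omega$ and the map $\sigma\mapsto\tilde\tau_\sigma$, I would finish by invoking Proposition \ref{Prop 2}, which endows $\tilde\Omega$ with the metric tensor $\tilde g=\eta_*g$ and certifies that $(\tilde\Omega,\tilde g)$ is isometric to $(\Omega,g)$. I expect the only genuinely delicate point to be the identification $\tilde\tau_\sigma=\tau_\sigma\circ\beta$, since it requires tracking the dual maps through both the unitary conjugation and the Gelfand correspondence while matching abstract characters with Dirac measures; everything else is a direct assembly of Propositions \ref{Prop 3}, \ref{Prop 5} and the handbook items.
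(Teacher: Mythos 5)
Your proposal is correct and follows essentially the same route as the paper: construct $\tau'_\sigma=U\check\tau_\sigma U^*$ from the spectral representation, pass to the algebra $\tilde{\mathfrak T}\overset{\rm isom}={\mathfrak T}$, geometrize via the Gelfand transform to get $\tilde\Omega\overset{\rm hom}=\Omega$ together with the functions $\tilde\tau_\sigma=G\tau'_\sigma$, and finish with Proposition \ref{Prop 2}. The only difference is that you spell out the pull-back identity $\tilde\tau_\sigma=\tau_\sigma\circ\beta$, which the paper asserts in a footnote "by construction"; your verification via Dirac measures is the right one.
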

Moreover, the procedure 1.-- 3. provides the copy $\tilde \Omega$.

\subsection{Algebra ${{\mathfrak E}}$}
Recall that the eikonals $\varepsilon_\sigma$ are introduced on a
3d-manifold $\Omega$ by (\ref{spec repr eps}).
\smallskip

An operator (sub)algebra \begin{equation}\label{Algebra E sol
(def)}{\mathfrak E}\,:=\,\vee\{\varepsilon_\sigma\,|\,\,\sigma \in
{\cal R}(\Gamma)\}\,\subset\,{\mathfrak B}({\cal C})\end{equation} is a "solenoidal" analog of the
algebra ${\mathfrak T}$ defined by (\ref{Algebra E (def)}). It is a real
algebra generated by self-adjoint operators. As such, ${\mathfrak E}$ is a
C*-algebra. In contrast to ${\mathfrak T}$, the algebra ${\mathfrak E}$ is not
commutative (see the remark below Proposition \ref{Prop 4}).
However, this non-commutativity is weak in the following sense.

Let ${\mathfrak K} \subset{\mathfrak B}({\cal C})$ be the ideal of compact operators.
Denote ${\mathfrak K}[{\mathfrak E}]:= {\mathfrak K} \cap {{\mathfrak E}}$ and $\dot {\mathfrak E}:={\mathfrak E}\slash
{\mathfrak K}[{\mathfrak E}]$;
let 
$\pi:{\mathfrak B}({\cal C})\to{\mathfrak B}({\cal C}) / {\mathfrak K}$
be the canonical projection.
By (\ref{Algebra E sol (def)}), the latter factor-algebra is
generated by the equivalence classes of eikonals:
$$\dot {\mathfrak E}\,:=\,\vee\{\pi \varepsilon_\sigma\,|\,\,\sigma \in
{\cal R}(\Gamma)\}.$$ Recall that the eikonals $\tau_\sigma$ generate the
algebra $C(\Omega)$: see Proposition \ref{Prop 5}.
\begin{theorem}\label{Theorem 2}
$\dot {\mathfrak E}$ is a commutative C*-algebra. The map
$$C(\Omega) \ni \tau_\sigma \mapsto \pi \varepsilon_\sigma \in \dot {\mathfrak E}
\qquad (\sigma \in {\cal R}(\Gamma)),$$ which relates the generators, can be extended
to an isometric isomorphism from $C(\Omega)$ onto $\dot {\mathfrak E}$.
\end{theorem}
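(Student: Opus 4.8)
The plan is to exhibit $\dot {\mathfrak E}$ as the range of a single $*$-homomorphism of $C(\Omega)$ into the Calkin algebra ${\mathfrak B}({\cal C})/{\mathfrak K}$, and to extract both commutativity and the isometric isomorphism from this. Let $P$ be the orthogonal projection of $\vec {\cal H}$ onto ${\cal C}$ and $Q:=I-P$; for $f\in C(\Omega)$ let $\check f\in{\mathfrak B}(\vec {\cal H})$ be multiplication by $f$ and put $T_f:=P\check f P\in{\mathfrak B}({\cal C})$, the compression of $\check f$ to ${\cal C}$ (so $\check\tau_\sigma$ gives $T_{\tau_\sigma}$). Since $\varepsilon_\sigma$ maps ${\cal C}$ into ${\cal C}$, multiplying the relation of Lemma~\ref{Lemma 1} by $P$ on the left yields $\varepsilon_\sigma-T_{\tau_\sigma}=P(\varepsilon_\sigma-\check\tau_\sigma)\in{\mathfrak K}$, so that $\pi\varepsilon_\sigma=\pi T_{\tau_\sigma}$. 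Thus it suffices to prove that $\Phi:f\mapsto\pi T_f$ is an isometric $*$-homomorphism of $C(\Omega)$ onto $\vee\{\pi\varepsilon_\sigma\}=\dot {\mathfrak E}$.

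The key point, and the only place where Lemma~\ref{Lemma 1} is really used, is that $Q\check\tau_\sigma P$ is compact. This is immediate: for $v\in{\cal C}$ we have $\varepsilon_\sigma v\in{\cal C}$, hence $Q\varepsilon_\sigma v=0$, and therefore $Q\check\tau_\sigma P=-Q(\varepsilon_\sigma-\check\tau_\sigma)\in{\mathfrak K}$. I would next promote this to every symbol. The set $\mathcal S:=\{f\in C(\Omega):Q\check f P\in{\mathfrak K}\}$ is norm-closed (the map $f\mapsto Q\check f P$ is contractive and ${\mathfrak K}$ is closed) and is a subalgebra: from $\check{fg}=\check f(P+Q)\check g$ one gets $Q\check{fg}P=(Q\check f P)(\check g P)+(Q\check f)(Q\check g P)$, and each summand carries a compact factor whenever $f,g\in\mathcal S$. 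Since $\mathcal S$ contains the constants and all eikonals, Proposition~\ref{Prop 5} gives $\mathcal S\supseteq\vee\{\tau_\sigma\}=C(\Omega)$; thus $Q\check f P\in{\mathfrak K}$ for all $f\in C(\Omega)$.

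Now $\Phi$ is multiplicative modulo compacts: the semicommutator is \[T_f T_g-T_{fg}=P\check f(P-I)\check g P=-P\check f\,(Q\check g P)\in{\mathfrak K},\] so $\pi T_f\,\pi T_g=\pi T_{fg}$. Together with linearity, $T_f^{\,*}=T_{\bar f}=T_f$ (the algebra is real, $f$ real-valued), and contractivity $\|T_f\|\le\|f\|_{C(\Omega)}$, this makes $\Phi$ a $*$-homomorphism of the C*-algebra $C(\Omega)$ into ${\mathfrak B}({\cal C})/{\mathfrak K}$. Its range is a C*-subalgebra generated by the elements $\Phi(\tau_\sigma)=\pi\varepsilon_\sigma$, i.e. $\vee\{\pi\varepsilon_\sigma\}$, which is $\dot {\mathfrak E}$ under the identification $\dot {\mathfrak E}={\mathfrak E}\slash{\mathfrak K}[{\mathfrak E}]\cong\pi({\mathfrak E})$. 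Being the homomorphic image of a commutative algebra, $\dot {\mathfrak E}$ is commutative --- the first assertion.

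It remains to show $\Phi$ is isometric; since an injective $*$-homomorphism of C*-algebras is automatically isometric, it is enough to prove $\|\pi T_f\|\ge\|f\|_{C(\Omega)}$. Fix an interior point $x_0$ with $|f(x_0)|$ as close as desired to $\|f\|_{C(\Omega)}$ (interior points suffice, by density and continuity) and build a singular sequence concentrating at $x_0$: normalized curls $v_k={\rm curl\,}h_k\in{\cal C}$ with $h_k$ smooth and supported in balls shrinking to $x_0$. Such $v_k$ tend to $0$ weakly, while continuity of $f$ gives $\|(\check f-f(x_0))v_k\|\to0$; since $T_f v_k-f(x_0)v_k=P(\check f-f(x_0))v_k$, we get $\|T_f v_k-f(x_0)v_k\|\to0$. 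As $v_k$ is normalized and weakly null, this places $f(x_0)$ in the essential spectrum of the self-adjoint $T_f$, whence $\|\pi T_f\|\ge|f(x_0)|$; letting $|f(x_0)|\to\|f\|_{C(\Omega)}$ finishes the bound. Hence $\Phi$ is an isometric isomorphism of $C(\Omega)$ onto $\dot {\mathfrak E}$ carrying $\tau_\sigma$ to $\pi\varepsilon_\sigma$. The main obstacle is exactly this last step: one must construct genuine singular sequences inside the constrained curl space ${\cal C}$ --- normalized, weakly null, concentrating at an arbitrary interior point --- rather than in all of $\vec {\cal H}$; that a family of curls of shrinking-support potentials can be so arranged is the one genuinely analytic ingredient, everything else being formal C*-algebra over the single compactness input of Lemma~\ref{Lemma 1}.
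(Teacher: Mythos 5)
Your proof is correct, and at the level of architecture it coincides with the paper's: your compression $T_f=P\check f P$ is exactly the paper's operator $Y[f]$, your $\Phi$ is the paper's map $\dot\pi$, and both arguments reduce the theorem to (i) a homomorphism-modulo-compacts statement and (ii) a singular-sequence lower bound for the essential norm. Where you genuinely diverge is in how these two ingredients are obtained, and in both places your route is more economical. For (i), the paper proves $\check f-Y[f]\in{\mathfrak K}({\cal C};\vec{\cal H})$ for \emph{every} $f\in C(\Omega)$ (Lemma~\ref{Eff}) by an independent analytic argument: a partition of unity, a local Helmholtz decomposition writing $fy-Y[f]y=\nabla p_y$ with $p_y\in H^1(U)$, and the compact embedding $H^1(U)\subset L_2(U)$. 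You obtain the same statement, in the equivalent form $Q\check f P\in{\mathfrak K}$, purely algebraically from Lemma~\ref{Lemma 1}: the set of symbols for which it holds is a norm-closed subalgebra of $C(\Omega)$ containing the constants and all eikonals, hence equals $C(\Omega)$ by Proposition~\ref{Prop 5}. This renders the appendix proof of Lemma~\ref{Eff} redundant, at the modest cost of tying the conclusion to the generating property of the eikonals (the paper's analytic proof is self-contained and also identifies the defect $fy-Y[f]y$ as a gradient, which is structurally informative). For (ii), the paper's Lemma~\ref{hatpihom} uses the special fields $y_j=\nabla f\wedge\nabla\varphi_j$, which requires $f$ smooth with $\nabla f(x_0)\ne 0$, exploits that both $y_j$ and $fy_j$ lie exactly in ${\cal C}$ so that $Y[f]y_j=fy_j$, and then passes to continuous $f$ by density; you instead use generic normalized curls concentrating at $x_0$ together with the estimate $\|(\check f-f(x_0))v_k\|\to 0$ and the contractivity of $P$, which needs nothing beyond continuity of $f$ and avoids the nonvanishing-gradient restriction. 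The one step you flag as delicate --- producing normalized, weakly null fields of ${\cal C}$ with supports shrinking to an arbitrary interior point --- is in fact routine: rescaling a fixed compactly supported potential with nonzero curl inside a coordinate ball gives $\|{\rm curl\,}h_k\|\sim r_k^{1/2}\ne 0$, and after normalization the shrinking supports force weak convergence to zero.
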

\begin{proof}
Define a map
$$
    \dot\pi : C(\Omega) \to {\mathfrak B}({\cal C}) / {\mathfrak K}
$$
in the following way. Let $Y$ be the projection on ${\cal C}$
acting in $\vec{\cal H}$. With a function $f\in
C(\Omega)$ we associate an operator $Y[f] \in {\mathfrak B}({\cal
C})$ acting by
$$
    Y[f]\, y := Y(f y), \quad y\in \cal C.
$$
Now, define
$$
    \dot\pi(f) := \pi(Y[f]).
$$

For $f\in C(\Omega)$ we denote by $\check f$ the operator
in $\vec{\cal H}$, which multiplies fields by $f$.
The following two Lemmas are proved in Appendix.
\begin{lemma}
    \label{Eff}
    For any $f\in C(\Omega)$ we have
    $$
        \check f - Y[f] \in {\mathfrak K}({\cal C}; \vec{\cal H}).
    $$
\end{lemma}
\begin{lemma}\label{isomorphism}
    The mapping $\dot\pi$
    is an injective homomorphism of C*-algebras.
    \label{hatpihom}
\end{lemma}

To prove Theorem~\ref{Theorem 2} it suffices to
show that the map $\dot\pi$ is an extension of the map
$\tau_\sigma \mapsto \pi \varepsilon_\sigma$. Toward
this end, let us show that $
    \varepsilon_\sigma - Y[\tau_\sigma] \in {\mathfrak K}.
$ Indeed, we have
$$
    \varepsilon_\sigma - Y[\tau_\sigma] = \varepsilon_\sigma - \check\tau_\sigma + \check\tau_\sigma - Y[\tau_\sigma]
$$
and, due to Lemmas~\ref{Lemma 1} and \ref{Eff}, there
is a sum of two compact operators from ${\mathfrak K}(\cal C;
\vec H)$ in the right hand side. Now Theorem~\ref{Theorem 2}
follows from Lemma~\ref{hatpihom} and the fact that algebra $\dot
{\mathfrak E}$ is generated by elements $\pi \varepsilon_\sigma$.
\end{proof}

With regard to items {\bf 4, 5} of sec 3.1, the relation
$C(\Omega)\overset{\rm isom}= \dot {\mathfrak E}$ established by Theorem \ref{Theorem 2}
implies \begin{equation}\label{Omega=Omega dot E sol}\Omega \,\overset{\rm hom}=\,
\widehat{C(\Omega)}\,\overset{\rm hom}=\, \widehat{\dot{\mathfrak E}}\,.\end{equation}
\medskip

\noindent{\bf Remark}\,\,\,Examples, in which
factorization eliminates noncommutativity, are well known. For
instance,
let $X$ be a compact smooth manifold (without boundary) and let
${\mathfrak A} \subset {\mathfrak B}(L_2(X))$ be a C*-algebra generated by
a certain class of pseudo-differential operators of order $0$.
Then the factor-algebra ${\mathfrak A}/{\mathfrak K}$ 
is commutative and isomorphic to the algebra of continuous functions on the
cosphere bundle of $X$ (see \cite{Plamen}).

\subsubsection*{On reconstruction}
Here we provide an analog of the procedure described in sec 3.2.
This analog is relevant to inverse problems of electrodynamics.
Recall that $Y^s_\sigma$ is the projection in ${\cal C}$ onto the
subspace ${\cal C}\langle\Omega^s[\sigma]\rangle$.

Assume that we are given with a Hilbert space $\tilde {\cal C}=U{\cal C}$,
where $U$ is a unitary operator. Also assume that we know the map
\begin{equation}\label{map sigma tilde Y}{\cal R}(\Gamma) \times [0,T]\,\ni \,\{\sigma,
s\}\,\mapsto \tilde Y^s_\sigma \,\in\, {\mathfrak B}(\tilde {\cal C}) \qquad
(T>{\rm diam\,}\Omega)\,,\end{equation} where $\tilde Y^s_\sigma := U
Y^s_\sigma U^*$, but the operator $U: {\cal C} \to \tilde {\cal C}$ is {\it
unknown}. Show that this map determines the manifold $\Omega$ up
to isometry. Indeed,

\begin{enumerate}
\item using the map, one can construct the operators $$
\varepsilon^\prime_\sigma:=\int_0^Ts\,d \tilde Y^s_\sigma =\int_0^Ts\,d\,
[U Y^s_\sigma U^*] \overset{(\ref{spec repr eps})}= U\check
\varepsilon_\sigma U^*$$ \item determine the algebra
${\mathfrak E}^\prime=\vee\{\varepsilon^\prime_\sigma\,|\,\,\sigma \in {\cal R}(\Gamma)\}
\subset {\mathfrak B}(\tilde{\cal C})\,,$ which is isometric to ${\mathfrak E} \subset
{\mathfrak B}({\cal C})$ (via {\it unknown $U$})

\item construct the factor-algebra $\tilde {\mathfrak E}:={\mathfrak E}^\prime \slash
{\mathfrak K}[{\mathfrak E}^\prime]$ over the compact operator ideal in ${\mathfrak E}^\prime$.
By construction, one has $\tilde {\mathfrak E} \overset{\rm isom}= {\mathfrak E}\slash
{\mathfrak K}[{\mathfrak E}]=:\dot {\mathfrak E}$.

\item applying the Gelfand transform to $\tilde {\mathfrak E}$, find its
spectrum $\widehat{\tilde {\mathfrak E}}=:\tilde \Omega$ and the functions
$\tilde \tau_\sigma:=G \pi \varepsilon^\prime_\sigma$ on $\tilde \Omega$.

\end{enumerate}

Since $\tilde {\mathfrak E} \overset{\rm isom}= \dot {\mathfrak E}$, one has $$\tilde \Omega
:=\widehat{\tilde {\mathfrak E}}\overset{\rm hom}= \widehat{\dot {\mathfrak E}}\overset{\rm hom}= \Omega$$ (see
(\ref{Omega=Omega dot E sol})). So, we get a homeomorphic copy
$\tilde \Omega$ of the original $\Omega$ along with the images
$\tilde \tau_\sigma$ of the original
eikonals $\tau_\sigma$ on $\Omega$. Thus, we have a version of the map 
(\ref{map sigma to tau sigma}). This map determines the Riemannian
structure on $\tilde \Omega$, which turns it into an isometric
copy of $\Omega$ (see Proposition \ref{Prop 2}).

Summarizing, we arrive at the following.
\begin{proposition}\label{Prop 7}
The map $(\ref{map sigma tilde Y})$ determines the copy $\tilde
\Omega$ and, hence, determines $\Omega$ up to isometry of
Riemannian manifolds.
\end{proposition}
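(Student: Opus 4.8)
The plan is to read the four-step procedure displayed just before the statement as a chain of isometric isomorphisms that transports the abstract data back to $C(\Omega)$, after which Proposition~\ref{Prop 2} does the geometric work. The only ingredients are the given family $\{\tilde Y^s_\sigma\}$ and the facts already established, foremost Theorem~\ref{Theorem 2}.

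First I would recover the transported eikonals from the projection family. Since $\tilde Y^s_\sigma = U Y^s_\sigma U^*$ and conjugation by the unitary $U$ commutes with the projection-valued integral, the spectral representation (\ref{spec repr eps}) gives
$$
\varepsilon'_\sigma := \int_0^T s\, d\tilde Y^s_\sigma = U\,\varepsilon_\sigma\, U^*,
$$
an operator computable purely from the data (here $T>{\rm diam\,}\Omega\geqslant\|\varepsilon_\sigma\|$, so integrating over $[0,T]$ loses nothing). Forming ${\mathfrak E}':=\vee\{\varepsilon'_\sigma\,|\,\sigma\in{\cal R}(\Gamma)\}\subset{\mathfrak B}(\tilde{\cal C})$, I would note that $a\mapsto UaU^*$ is an isometric $*$-isomorphism ${\mathfrak B}({\cal C})\to{\mathfrak B}(\tilde{\cal C})$ carrying $\varepsilon_\sigma$ to $\varepsilon'_\sigma$, whence ${\mathfrak E}'\overset{\rm isom}={\mathfrak E}$.

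Next comes the step peculiar to the electrodynamic case: the factorization. Because $U$ is unitary it sends compact operators to compact operators, so the same isomorphism carries ${\mathfrak K}[{\mathfrak E}]$ onto ${\mathfrak K}[{\mathfrak E}']$; passing to the quotients yields $\tilde{\mathfrak E}:={\mathfrak E}'/{\mathfrak K}[{\mathfrak E}']\overset{\rm isom}=\dot{\mathfrak E}$. Invoking Theorem~\ref{Theorem 2} then gives $\tilde{\mathfrak E}\overset{\rm isom}= C(\Omega)$, with the classes $\pi\varepsilon'_\sigma$ matching the eikonals $\tau_\sigma$. From here the argument parallels the scalar reconstruction of sec 3.2: applying the Gelfand transform to $\tilde{\mathfrak E}$ produces its spectrum $\tilde\Omega:=\widehat{\tilde{\mathfrak E}}$ and the functions $\tilde\tau_\sigma:=G\pi\varepsilon'_\sigma$. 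By handbook items~\textbf{4},~\textbf{5} together with (\ref{Omega=Omega dot E sol}), the isometry $\tilde{\mathfrak E}\overset{\rm isom}= C(\Omega)$ forces $\tilde\Omega\overset{\rm hom}=\widehat{\dot{\mathfrak E}}\overset{\rm hom}=\Omega$, and under this homeomorphism each $\tilde\tau_\sigma$ is precisely the pull-back of $\tau_\sigma$. Thus $\sigma\mapsto\tilde\tau_\sigma$ is a concrete realization of the map (\ref{map sigma to tau sigma}), and Proposition~\ref{Prop 2} completes the argument by endowing $\tilde\Omega$ with a Riemannian structure isometric to $\Omega$.

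At the level of Proposition~\ref{Prop 7} itself I anticipate no serious obstacle, since the statement is essentially bookkeeping: all the analytic weight has already been discharged into Theorem~\ref{Theorem 2} (and, through it, Lemma~\ref{Lemma 1}), which is what renders $\dot{\mathfrak E}$ commutative and pins it to $C(\Omega)$. The one point that deserves a moment's care is the compatibility of factorization with the isometry — that unitary conjugation intertwines the two compact ideals, so that $\tilde{\mathfrak E}\overset{\rm isom}=\dot{\mathfrak E}$ — together with the verification that the Gelfand images $\tilde\tau_\sigma$ are genuinely the pulled-back eikonals, which is exactly what licenses the appeal to Proposition~\ref{Prop 2}.
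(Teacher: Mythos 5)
Your proposal is correct and follows essentially the same route as the paper: it reproduces the four-step procedure (spectral integration of the given projections, generation of ${\mathfrak E}'$, factorization by the compact ideal, Gelfand transform) and then invokes Theorem~\ref{Theorem 2}, the homeomorphism (\ref{Omega=Omega dot E sol}), and Proposition~\ref{Prop 2} exactly as the paper does. The extra remarks on unitary conjugation preserving the compact ideal and on the Gelfand images being pull-backs of the eikonals are accurate and only make explicit what the paper leaves implicit.
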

Moreover, the procedure 1.-- 4. enables one to construct the copy
$\tilde \Omega$. This procedure differs from its scalar analog by
one additional step that is factorization.

\section{Inverse problems}
\subsection{Acoustical system}
With the manifold $\Omega$ one associates a dynamical system
$\alpha^T$ of the form
\begin{align}\label{acoust 1}
 & u_{tt}-\Delta u=0 &&{\rm in}\,\,\,(\Omega\backslash\Gamma) \times
(0,T)\\
\label{acoust 2}
 &u|_{t=0}=u_t|_{t=0}\,=\,0 &&{\rm in}\,\,\,\,\Omega\\
\label{acoust 3} &  u=f  &&{\rm on}\,\,\,\Gamma \times [0,T],
\end{align}
where $\Delta$ is the (scalar) Beltrami--Laplace operator, $t=T>0$
is a final time, $f$ is a {\it boundary control}, $u=u^f(x,t)$ is
a solution. For controls of the smooth class
$${\cal M}^T:=\{f \in C^\infty(\Gamma \times [0,T])\,|\,\,{\rm supp\,}
f \subset \Gamma \times (0,T]\}$$ problem (\ref{acoust
1})--(\ref{acoust 3}) has a unique classical (smooth) solution
$u^f$. Note that the condition on ${\rm supp\,}f$ means that $f$
vanishes near $t=0$.

>From the physical viewpoint, $u^f$ can be interpreted as an
acoustical {\it wave}, which is initiated by the boundary sound
source $f$ and propagates into a domain $\Omega$ filled with an
inhomogeneous medium.

\subsubsection*{Attributes}
$\bullet$\,\,\,The space of controls ${\cal F}^T:= L_2\left(\Gamma
\times [0,T]\right)$ is said to be an {\it outer space} of the
system $\alpha^T$. The smooth class ${\cal M}^T$ is dense in ${\cal F}^T$.

The outer space contains the subspaces
$${\cal F}^{T, s}_\sigma:= \{f \in {\cal F}^T\,|\,\,{\rm supp\,}f \subset
\sigma \times [T-s,T]\}, \qquad \sigma \in {\cal R}(\Gamma).$$ Such
a subspace consists of controls, which are located on $\sigma$ and
switched on with delay $T-s$\,\,(the value $s$ is an action time).
\smallskip

\noindent$\bullet$\,\,\,An {\it inner space} of the system is
${\cal H}=L_2(\Omega)$. The waves $u^f(\,\cdot\,,t)$ are time dependent
elements of ${\cal H}$.
\smallskip

\noindent$\bullet$\,\,\,In the system $\alpha^T$, the input
$\mapsto$ state correspondence is realized by a {\it control
operator} $W^T: {\cal F}^T \to {\cal H},\,\,\,{\rm
Dom\,}W^T={\cal M}^T$
$$ W^T f\,:=\,u^f(\,\cdot\,,T)\,.$$ A specifics of the system
governed by the {\it scalar} wave equation (\ref{acoust 1}) is
that $W^T$ is a bounded operator. Therefore one can extend it from
${\cal M}^T$ onto ${\cal F}^T$ by continuity that we assume to be done.
\smallskip

\noindent$\bullet$\,\,\, The input $\mapsto$ output map is
represented by a {\it response operator} $R^T: {\cal F}^T \to
{\cal F}^T,\,\,{\rm Dom~}R^T={\cal M}^T,$
$$
R^Tf:=\frac{\partial u^f}{\partial \nu}\bigg|_{\Gamma \times
[0,T]}~,
$$
where $\nu=\nu(\gamma)$ is an outward normal at $\gamma \in
\Gamma$.

The following evident fact was already mentioned in Introduction.
\begin{proposition}\label{Prop 8}
If two Riemannian manifolds have the mutual boundary and are
isometric (the isometry being identity at the boundary), then their (acoustical) response operators coincide. In
particular, for the manifold $\Omega$ and its copy $\tilde \Omega$
one has $R^{2T}=\tilde R^{2T}$ for any $T>0$.
\end{proposition}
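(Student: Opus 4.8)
The plan is to use the fact that an isometry of Riemannian manifolds intertwines their Beltrami--Laplace operators and therefore transports solutions of the wave equation from one manifold to the other. Denote the two manifolds by $\Omega$ and $\tilde\Omega$, let $\Phi:\Omega\to\tilde\Omega$ be the isometry with $\Phi^*\tilde g=g$ and $\Phi|_\Gamma={\rm id}$, and write $\tilde u^f$, $\tilde R^T$ for the wave and the response operator of $\tilde\Omega$. Since $\Phi$ is an isometry, it intertwines the operators: $\Delta_{\tilde\Omega}(w\circ\Phi^{-1})=(\Delta_\Omega w)\circ\Phi^{-1}$ for smooth $w$ on $\Omega$; moreover $\Phi$ carries the interior of $\Omega$ to that of $\tilde\Omega$ and $\Gamma$ to $\Gamma$.

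First I would verify that $u^f\circ\Phi^{-1}$ solves the initial--boundary value problem $(\ref{acoust 1})$--$(\ref{acoust 3})$ on $\tilde\Omega$ with the \emph{same} control $f$. The intertwining relation gives the wave equation on $(\tilde\Omega\setminus\Gamma)\times(0,T)$; the zero Cauchy data at $t=0$ are preserved because $\Phi^{-1}$ does not act on $t$; and on the boundary $(u^f\circ\Phi^{-1})|_{\Gamma\times[0,T]}=(u^f|_{\Gamma\times[0,T]})\circ(\Phi^{-1}|_\Gamma)=f$, the last step using $\Phi|_\Gamma={\rm id}$. By uniqueness of the classical solution on $\tilde\Omega$, we conclude $\tilde u^f=u^f\circ\Phi^{-1}$.

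Next I would compare the normal derivatives at $\Gamma$. Fix $\gamma\in\Gamma$, viewed as the same point in both manifolds under the boundary identification. The differential $d\Phi_\gamma:T_\gamma\Omega\to T_\gamma\tilde\Omega$ is a linear isometry restricting to the identity on $T_\gamma\Gamma$, since $\Phi|_\Gamma={\rm id}$. An isometry fixing the tangent hyperplane to $\Gamma$ must send the unit normal to $\pm$ the unit normal, and, as $\Phi$ maps the interior of $\Omega$ to that of $\tilde\Omega$, the outward normal $\nu$ goes to the outward normal of $\tilde\Omega$, which coincides with $\nu$ under the identification. Applying the chain rule to $\tilde u^f=u^f\circ\Phi^{-1}$ at $\gamma$ then yields $\partial\tilde u^f/\partial\nu=\partial u^f/\partial\nu$ on $\Gamma\times[0,T]$, i.e. $\tilde R^Tf=R^Tf$ for all $f\in{\cal M}^T$. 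As ${\cal M}^T$ is the common domain of both operators, this gives $\tilde R^T=R^T$.

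The only genuinely delicate point is the equality of the normal derivatives; everything else is the routine transport of the equation by an isometry. That point reduces to the elementary linear-algebra observation above: an isometry of tangent spaces fixing the boundary hyperplane fixes the outward unit normal. Finally, the last assertion is immediate: by construction (Proposition \ref{Prop 2}) the copy $\tilde\Omega$ is isometric to $\Omega$ with the mutual boundary $\Gamma$, the isometry being the identity on $\Gamma$, so the general statement applies with final time $2T$ and gives $R^{2T}=\tilde R^{2T}$ for every $T>0$.
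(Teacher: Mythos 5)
Your argument is correct and is precisely the standard transport-of-structure reasoning that the paper has in mind: the authors state Proposition \ref{Prop 8} without proof, calling it ``evident,'' and your write-up (isometry intertwines the Beltrami--Laplace operators, preserves the Cauchy and boundary data, uniqueness gives $\tilde u^f=u^f\circ\Phi^{-1}$, and the pointwise-fixed boundary forces the outward normals to correspond, hence equal normal derivatives) is exactly the justification they are implicitly relying on. Nothing is missing; in particular you correctly isolate and resolve the one nontrivial point, namely that $d\Phi_\gamma$ fixes the outward unit normal.
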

\medskip

\noindent$\bullet$\,\,\, A {\it connecting operator} $C^T: {\cal
F}^T \to {\cal F}^T$ is defined by
\begin{equation}\label{connecting operator}C^T\,:=\,(W^T)^* W^T.
\end{equation}
By the definition, we have
\begin{equation*} (C^T f, g)_{{\cal F}^T} =
(W^T f, W^T g)_{\cal H} = \left(u^f(\,\cdot\,,T),
u^g(\,\cdot\,,T)\right)_{\cal H}\,,
\end{equation*}
i.e., $C^T$ connects the Hilbert metrics of the outer and inner
spaces. A significant fact is that the connecting operator is
determined by the response operator of the system $\alpha^{2T}$
through an explicit formula
\begin{equation}\label{C T via R 2T}
 C^T= \frac{1}{2}~(S^T)^* R^{2T} J^{2T}
S^T\,,
\end{equation}
 where the map $S^T: {\cal F}^T \to {\cal F}^{2T}$ extends the
controls from $\Gamma \times [0,T]$ to $\Gamma \times [0,2T]$ as
odd functions (of time $t$) with respect to $t=T$; $J^{2T}: {\cal
F}^{2T}\to {\cal F}^{2T}$ is an
integration:\,\,$(J^{2T}f)(\cdot,t)=\int_0^t f(\cdot,s)\,ds $ (see
\cite{BIP97}, \cite{BIP'07}).

\subsubsection*{Controllability}
The set ${\cal U}^s_\sigma:=\{u^f(\,\cdot\,,s)\,|\,\,f \in
{\cal F}^T_\sigma\}$ is said to be {\it reachable} (from $\sigma$, at
the moment $t=s$).

The operator $\Delta$, which governs the evolution of the system
$\alpha^T$, does not depend on time. By this, a time delay of
controls implies the same delay of the waves. As a result, one has
\begin{equation*}{\cal U}^s_\sigma\,=\,W^T {\cal F}^{T,s}_\sigma\,, \qquad 0 \leqslant s
\leqslant T\,.\end{equation*}

Problem (\ref{acoust 1})--(\ref{acoust 3}) is hyperbolic and the
finiteness of domains of influence does hold for its solutions:
for the delayed controls one has \begin{equation}\label{supp uf}{\rm
supp\,}u^f(\,\cdot\,, T)\subset \overline {\Omega^s[\sigma]}\,,
\qquad f \in {\cal F}^{T,s}_\sigma\,.\end{equation} The latter means that in the
system $\alpha^T$ the waves propagate with the unit velocity. As a
result, the embedding ${\cal U}^s_\sigma \,\subset\,
{\cal H}\langle\Omega^s[\sigma]\rangle$ is valid. The character of this
embedding is of principal importance: it turns out to be {\it
dense}. The following result is based upon the fundamental
Holmgren--John--Tataru uniqueness theorem (see \cite{BIP97},
\cite{BIP'07} for detail).
\begin{proposition}\label{Prop 9}
For any $s>0$ and $\sigma \in {\cal R}(\Gamma)$, the relation $\overline
{{\cal U}^s_\sigma}\,=\,{\cal H}\langle\Omega^s[\sigma]\rangle$ is valid (the
closure in ${\cal H}$). In particular, for $s=T>{\rm diam\,}\Omega$ one
has $\overline {{\cal U}^T_\sigma}\,=\,{\cal H}$.
\end{proposition}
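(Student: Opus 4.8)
The inclusion $\overline{{\cal U}^s_\sigma}\subseteq{\cal H}\langle\Omega^s[\sigma]\rangle$ is already secured by the finiteness of the domains of influence \eref{supp uf}, so the plan is to establish the reverse inclusion, i.e. the \emph{density} of ${\cal U}^s_\sigma$ in ${\cal H}\langle\Omega^s[\sigma]\rangle$. Equivalently, I would take an arbitrary $w\in{\cal H}\langle\Omega^s[\sigma]\rangle$ with $w\perp{\cal U}^s_\sigma$ and prove $w=0$. By causality it suffices to work with the wave equation on the time interval $(0,s)$ and controls supported on $\sigma\times[0,s]$, so that the orthogonality reads $(u^f(\cdot,s),w)_{\cal H}=0$ for all such $f$.

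The first step is a duality (Green) computation. I would introduce the backward adjoint wave $v$, the solution of
$$v_{tt}-\Delta v=0\ \ \text{in }(\Omega\backslash\Gamma)\times(0,s),\qquad v|_{t=s}=0,\ \ v_t|_{t=s}=w,\qquad v|_{\Gamma\times[0,s]}=0,$$
and integrate the identity $v\,(\partial_t^2 u^f-\Delta u^f)-u^f\,(v_{tt}-\Delta v)=0$ over $\Omega\times(0,s)$. Using the Green formula, the zero initial data \eref{acoust 2} of $u^f$, and the final data of $v$, all terms cancel except two, leaving an identity of the form $(u^f(\cdot,s),w)_{\cal H}=(f,\partial_\nu v|_{\Gamma\times[0,s]})_{{\cal F}^s}$ (the trace $\partial_\nu v$ being meaningful by the standard hidden regularity of the Dirichlet wave). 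Since $f$ ranges over all controls supported in $\sigma\times[0,s]$, the assumed orthogonality is therefore \emph{equivalent} to $\partial_\nu v=0$ on $\sigma\times(0,s)$; together with the built-in condition $v=0$ there, the full Cauchy data of $v$ vanish on the lateral piece $\sigma\times(0,s)$.

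The decisive step exploits the time-independence of $\Delta$. I would odd-reflect $v$ about $t=s$ (legitimate precisely because the coefficients do not depend on $t$ and $v|_{t=s}=0$), obtaining a solution $\tilde v$ of the wave equation on $\Omega\times(0,2s)$ with vanishing Cauchy data on the whole of $\sigma\times(0,2s)$ and with $t=s$ now sitting at the \emph{center} of the time interval. The surface $\sigma\times(0,2s)$ is non-characteristic for $\partial_t^2-\Delta$, so the sharp Holmgren--John--Tataru uniqueness theorem applies and forces $\tilde v\equiv0$ in the double cone $\{(x,t):{\rm dist}(x,\sigma)<s-|t-s|\}$. Its central slice $t=s$ is exactly $\Omega^s[\sigma]=\{x:{\rm dist}(x,\sigma)<s\}$, whence $w=v_t(\cdot,s)=\tilde v_t(\cdot,s)=0$ on $\Omega^s[\sigma]$; as ${\rm supp}\,w\subset\overline{\Omega^s[\sigma]}$, this gives $w=0$. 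The particular case is then immediate: for $s=T>{\rm diam\,}\Omega$ one has $\Omega^s[\sigma]=\Omega$ by \eref{Omega r[A]= Omega}, so ${\cal H}\langle\Omega^s[\sigma]\rangle={\cal H}$.

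I expect the genuine obstacle to be the unique-continuation step, not the duality bookkeeping. One must invoke the \emph{sharp}, metric-exact form of Tataru's theorem so that the uniqueness region is the full cone whose central slice fills all of $\Omega^s[\sigma]$; a non-sharp version would only yield a strictly smaller region and fail to give density up to the metric radius $s$. The odd-extension device, and hence the time-independence of the coefficients, is what converts the one-sided final-time datum into the symmetric two-sided configuration the double-cone theorem requires. A secondary, more routine point is justifying the Green identity and the trace $\partial_\nu v$ for merely $L_2$ data $w$, which I would handle by first taking smooth $f\in{\cal M}^s$ with an approximating sequence for $w$ and then passing to the limit via boundedness of $W^s$.
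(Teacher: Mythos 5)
Your argument is correct and is precisely the standard one: the paper itself does not prove Proposition \ref{Prop 9} but defers to the cited references on the boundary control method, where the proof runs exactly as you describe — duality reduction to vanishing Cauchy data on $\sigma\times(0,s)$ for the adjoint wave, odd reflection in time about $t=s$, and the sharp Holmgren--John--Tataru uniqueness theorem in the double cone whose central slice is $\Omega^s[\sigma]$. You also correctly identify the sharpness of Tataru's theorem as the essential ingredient, without which density only up to a smaller radius would follow.
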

In control theory this property is referred to as a {\it local
approximate boundary controllability} of the system $\alpha^T$. It
shows that the reachable sets are rich enough: any function
supported in the neighborhood $\Omega^s[\sigma]$ can be
approximated (in ${\cal H}$-metric) by a wave $u^f(\,\cdot\,,T)$ by
means of the proper choice of the control $f \in
{\cal F}^{T,s}_\sigma$.

By $P^s_\sigma$ we denote the projection in ${\cal H}$ onto the
reachable subspace $\overline{{\cal U}^s_\sigma}$ and call it a {\it
wave projection}. Recall that $X^s_\sigma$ is the projection in
${\cal H}$ onto ${\cal H}\langle\Omega^s[\sigma]\rangle$, which cuts off functions
onto the neighborhood $\Omega^s[\sigma]$. As a consequence of the
Proposition \ref{Prop 9} we obtain \begin{equation}\label{P=X}
P^s_\sigma\,=\,X^s_\sigma \,, \qquad \quad s>0,\,\,\,\sigma \in
{\cal R}(\Gamma)\,. \end{equation}

\subsection{IP of acoustics}
\subsubsection*{Setup} A dynamical inverse problem (IP) for the
system (\ref{acoust 1})--(\ref{acoust 3}) is set up as follows:

\noindent{\it given for a fixed $T>{\rm diam\,}\Omega$ the
response operator $R^{2T}$, to recover the mani\-fold $\Omega$}.

\noindent A physical meaning of the condition $T>{\rm
diam\,}\Omega$ is that the waves $u^f$, which prospect the
manifold from the parts $\sigma$ of its boundary, need big enough
time to fill the whole $\Omega$: see (\ref{supp uf}) and
(\ref{Omega r[A]= Omega}).
\smallskip

As was clarified in Introduction, {\it to recover} $\Omega$ means
to construct (via given $R^{2T}$) a Riemannian manifold, which has
the same boundary $\Gamma$, and possesses the response operator,
which is equal to $R^{2T}$. Speaking in advance, it will be shown
that $R^{2T}$ determines the copy $\tilde \Omega$. Thus, $\tilde
\Omega$ provides the solution to the IP.

\subsubsection*{Model}
As an operator connecting two Hilbert spaces, the control operator
$W^T: {\cal F}^T \to {\cal H}$ can be represented in the form of a {\it
polar decomposition} \begin{equation*}W^T\,=\,\Phi^T |W^T| \,,\end{equation*} where
$$|W^T|\,:=\,\left[\left(W^T\right)^*
W^T\right]^{\frac{1}{2}}\overset{(\ref{connecting
operator})}=\left(C^T\right)^{\frac{1}{2}}$$ and $\Phi^T: |W^T|f
\mapsto W^T f$ is an isometry from ${{\rm Ran}\,}|W^T|\subset {\cal F}^T$
onto ${{\rm Ran}\,}W^T \subset {\cal H}$ (see, e.g., \cite{BSol}). In what
follows we assume that $\Phi^T$ is extended by continuity to an
isometry from $\overline{{{\rm Ran}\,}|W^T|}$ onto
$\overline{{{\rm Ran}\,}W^T}$.

Recall that ${\cal U}^s_\sigma:=W^T {\cal F}^{T,s}_\sigma$ are the reachable
sets of the system $\alpha^T$ and $P^s_\sigma$ is the projection
in ${\cal H}$ onto $\overline{{\cal U}^s_\sigma}$.

Let us say the (sub)space $\tilde {\cal H}:=\overline {{{\rm Ran}\,}|W^T|}
\subset {\cal F}^T$ to be a {\it model inner space}, $\tilde
{\cal U}^s_\sigma:=|W^T| {\cal F}^{T,s}_\sigma \subset \tilde {\cal H}$ a {\it
model reachable set}. By $\tilde P^s_\sigma$ we denote the
projection in $\tilde {\cal H}$ onto $\overline{\tilde {\cal U}^s_\sigma}$
and call it a {\it model wave projection}.

The model and original objects are related through the isometry
$\Phi^T$. In particular, the definitions imply $\Phi^T \tilde
P^s_\sigma=P^s_\sigma \Phi^T$.
\smallskip

Now let $T>{\rm diam\,}\Omega$, so that $\Omega^T[\sigma]=\Omega$
holds for any $\sigma$. By Proposition \ref{Prop 9}, one has
$\overline {{{\rm Ran}\,}W^T}={\cal H}$. By this, the isometry $\Phi^T$
turns out to be a unitary operator from $\tilde {\cal H}$ onto ${\cal H}$.
Its inverse $U:=(\Phi^T)^*$ maps ${\cal H}$ onto $\tilde {\cal H}$
isometrically and $U P^s_\sigma=\tilde P^s_\sigma U$ holds.

Let $\tilde X^s_\sigma:=U X^s_\sigma U^*$ be the image (in $\tilde
{\cal H}$) of the cutting off projection. The property (\ref{P=X})
implies \begin{equation}\label{tilde P= tilde X} \tilde P^s_\sigma\,=\,\tilde
X^s_\sigma \,, \qquad \quad s>0,\,\,\,\sigma \in {\cal R}(\Gamma)\,. \end{equation}

\subsubsection*{Solving IP}
It suffices to show that the operator $R^{2T}$ determines the copy
$\tilde \Omega$. The procedure is the following.
\begin{enumerate}
\item Find the connecting operator by (\ref{C T via R 2T}).
Determine the operator $|W^T|=\left(C^T\right)^{\frac{1}{2}}$ and
the subspace $\tilde {\cal H}=\overline{{\rm Ran\,}|W^T|}\subset
{\cal F}^T$.

\item Fix a $\sigma \in {\cal R}(\Gamma)$ and $s \in (0,T]$. In $\tilde {\cal H}$
recover the model reachable set $\tilde {\cal U}^s_\sigma=|W^T|
{\cal F}^{T,s}_\sigma \subset \tilde {\cal H}$ and determine the
corresponding projection $\tilde P^s_\sigma$. By (\ref{tilde P=
tilde X}), we get the projection $\tilde X^s_\sigma$. Thus, the
map (\ref{map sigma tilde X}) is at our disposal.

\item By Proposition \ref{Prop 6}, this map determines the copy
$\tilde \Omega$. Its response operator $\tilde R^{2T}$ coincides
with the given $R^{2T}$: see Proposition \ref{Prop 8}.
\end{enumerate}
The acoustical IP is solved.

\subsection{Maxwell system}
Here $\Omega$ is a smooth compact oriented Riemannian 3d-manifold.

Propagation of electromagnetic waves in a curved space is
described by the dynamical Maxwell system $\alpha^T_{\rm M}$
\begin{align}
\label{max 1}& e_{t}={\rm curl\,}h, \,\,h_{t}=-{\rm curl\,}e && {\rm in}\,\,\,
(\Omega\backslash \Gamma) \times (0,T)\\
\label{max 2}& e|_{t=0}=0,\,\,\, h|_{t=0}\,=\,0 && {\rm in}\,\,\,\,\Omega\\
\label{max 3}& e_\theta\,=\,f                  && {\rm
on}\,\,\,\Gamma \times [0,T]\,,
\end{align}
where $e_\theta:=e-e\cdot \nu\,\nu$ is a tangent component of $e$
at the boundary, $f$ is a time-dependent tangent field on
$\Gamma$\, ({\it boundary control}), $e$ and $h$ are the electric
and magnetic components of the solution. For controls of the
smooth class
$${\cal M}^T:=\left\{f \in \vec C^\infty\left(\Gamma
\times [0,T]\right)\,\big|\,\,\nu \cdot f =0,\,\,\, {\rm supp\,} f
\subset \Gamma \times (0,T]\right\}\,,$$ problem (\ref{max
1})-(\ref{max 3}) has a unique classical smooth solution
$\{e^f(x,t),h^f(x,t)\}$. Note that the condition on ${\rm
supp\,}f$ means that $f$ vanishes near $t=0$.

Since a divergence is an integral of motion of the Maxwell
system, one has \begin{equation*} {\rm div\,} e^f(\,\cdot\,,t)=0,\,\,\,{\rm div\,} h^f(\,\cdot\,,t)=0, \qquad t
\geqslant 0\,.
\end{equation*}

\subsubsection*{Attributes}
$\bullet$\,\,\,An {\it outer space} of the system $\alpha^T_{\rm
M}$ is the space $${\cal F}^T:= \left\{f \in \vec L_2\left(\Gamma
\times [0,T]\right)\,\big|\,\,\nu \cdot f =0\right\}.$$ The smooth
class ${\cal M}^T$ is dense in ${\cal F}^T$.

The outer space contains the subspaces
$${\cal F}^{T, s}_\sigma:= \left\{f \in {\cal F}^T\,\big|\,\,{\rm supp\,}f \subset
\sigma \times [T-s,T]\right\}, \qquad \sigma \in {\cal R}(\Gamma)$$
of controls, which are located on $\sigma$ and switched on with
delay
$T-s$\,\,(the value $s$ is an action time). 
\smallskip

\noindent$\bullet$\,\,\,An {\it inner space} of the system is the
space ${\cal C} \oplus {\cal C}$. By (\ref{max 1}), the solutions
$\{e^f(\,\cdot\,,t), h^f(\,\cdot\,,t)\}$ are time dependent
elements of this space. Also, we select its electric part ${\cal C} \oplus
 \{0\}\ni e^f(\,\cdot\,,t)$.
\smallskip

\noindent$\bullet$\,\,\,The input $\mapsto$ state correspondence
is realized by a {\it control operator} $W^T_{\rm M}: {\cal F}^T
\to {\cal C}\oplus {\cal C},\,\,\,{\rm Dom\,}W^T_{\rm M}={\cal M}^T$, $
W^T_{\rm M} f\,:=\,\{e^f(\,\cdot\,,T), h^f(\,\cdot\,,T)\}\,.$ Its
electric part is $W^T: {\cal F}^T \to {\cal C}$,
$$W^T: f \mapsto e^f(\,\cdot\,,T)\,.$$
In contrast to the acoustical (scalar) system, $W^T_{\rm M}$ and
$W^T$ are unbounded (but closable) operators.

A reason to select an electric part of the system $\alpha^T_{\rm
M}$ is that it is the electric component, which is controlled at
the boundary: see (\ref{max 3}). By this, $e^f$ and $h^f$ are not
quite independent. Moreover, for 
$T<\inf\{r>0\,|\,\,\Omega^r[\Gamma]=\Omega\}$ the operator $W^T$
is injective and, hence, $e^f(\,\cdot\,,T)$ determines
$h^f(\,\cdot\,,T)$ \cite{BIP'07}, \cite{BD_1}.

\smallskip

\noindent$\bullet$\,\,\, The input $\mapsto$ output map of the
system $\alpha^T_{\rm M}$ 
is represented by a {\it response operator} $R^T: {\cal F}^T \to
{\cal F}^T,\,\,{\rm Dom~}R^T={\cal M}^T,$
$$
R^Tf:=\nu \wedge h^f \big|_{\Gamma \times [0,T]}\,.
$$
The following fact is quite evident.
\begin{proposition}\label{Prop 10}
If two Riemannian manifolds have the mutual boundary and are
isometric (the isometry being identity at the boundary), then their Maxwell response operators coincide. In
particular, for the manifold $\Omega$ and its canonical copy
$\tilde \Omega$ one has $R^{2T}=\tilde R^{2T}$ for any $T>0$.
\end{proposition}
\medskip

\noindent$\bullet$\,\,\, An electric {\it connecting operator}
$C^T: {\cal F}^T \to {\cal F}^T$ is introduced via a {\it
connecting form} $c^T,\,\,{\rm Dom\,}c^T={\cal M}^T \times {\cal M}^T$,
\begin{equation*}
c^T[f,g]\,:=\,\left(e^f(\,\cdot\,,T),
e^g(\,\cdot\,,T)\right)_{\cal C}\,=\,\left(W^T f, W^T g\right)_{\cal C}\,.
\end{equation*}
It is a Hermitian nonnegative bilinear form. As such, it is
closable, the closure $\bar c^T$ being defined on ${\cal N}^T\times
{\cal N}^T$, where ${\cal N}^T$ is a lineal in ${\cal F}^T$, ${\cal N}^T \supset
{\cal M}^T$. The form $\bar c^T$ determines a unique self-adjoint
operator $C^T$ by the relation
\begin{equation*}\label{connecting operator electric}
 (C^T f, g)_{{\cal F}^T}\,=\,\bar c^T[f, g]\,, \qquad f \in {\rm Dom\,}C^T, \,g \in {\cal N}^T
\end{equation*}
(see, e.g., \cite{BSol}). In fact, to close $c^T$ is to close
$W^T$, and one has ${\cal N}^T={\rm Dom\,}\bar W^T={\rm Dom\,}(C^T)^{1
\over 2}$. Hence, the knowledge of $c^T$ enables one to extend
$W^T$ from ${\cal M}^T$ to ${\cal N}^T$. In what follows this extension
(closure) is assumed to be done and denoted by the same symbol
$W^T$. The images $W^T f$ for $f\in {\cal N}^T$ are regarded as the
generalized solutions $e^f(\,\cdot\,,T)$.

As a result, one has the relations
\begin{equation}\label{closed connecting form}
\bar c^T[f,g]=\left((C^T)^{1 \over 2}f, (C^T)^{1 \over 2}
g\right)_{{\cal F}^T}=\left(W^T f, W^T g\right)_{\cal C}\,, \qquad f,g \in
{\cal N}^T\,.
\end{equation}
A key fact is that the connecting form is determined by the
response operator of the system $\alpha^{2T}_{\rm M}$ through an
explicit formula
\begin{equation}\label{C T via R 2T electric}
 c^T[f,g]= \left(2^{-1}(S^T)^* R^{2T} J^{2T}
S^T f, g \right)_{{\cal F}^T}\,, \qquad f, g \in {\cal M}^T\,,
\end{equation}
where the map $S^T: {\cal F}^T \to {\cal F}^{2T}$ extends the
controls from $\Gamma \times [0,T]$ to $\Gamma \times [0,2T]$ as
odd functions (of time $t$) with respect to $t=T$; $J^{2T}: {\cal
F}^{2T}\to {\cal F}^{2T}$ is an
integration:\,\,$(J^{2T}f)(\cdot,t)=\int_0^t f(\cdot,s)\,ds $ (see
\cite{BIP'07}).
\smallskip

Resuming the aforesaid, we can claim that $R^{2T}$ determines the
operator $(C^T)^{1 \over 2}$ by the scheme
\begin{equation}\label{scheme R2T to C 1/2}
R^{2T}\overset{(\ref{C T via R 2T electric})}\Rightarrow c^T
\Rightarrow \bar c^T \Rightarrow C^T \Rightarrow (C^T)^{1 \over
2}\,.
\end{equation}

\subsubsection*{Controllability}
The set ${\cal E}^s_\sigma:=\{e^f(\,\cdot\,,s)\,|\,\,f \in {\cal F}^T_\sigma
\cap {\cal M}^T\}$ is said to be {\it reachable} (from $\sigma$, at
the moment $t=s$).

The operators ${\rm curl\,}$, which govern the evolution of the system
$\alpha^T_{\rm M}$, does not depend on time. By this, a time delay
of controls implies the same delay of the waves. As a
result, one can represent \begin{equation*}{\cal E}^s_\sigma\,=\,W^T \left[{\cal F}^{T,s}_\sigma \cap
{\cal M}^T\right]\,.\end{equation*}

The Maxwell system (\ref{max 1})--(\ref{max 3}) obeys the
finiteness of domains of influence principle: for the delayed
controls  one has \begin{equation}\label{supp uf electric}{\rm
supp\,}e^f(\,\cdot\,, T)\subset \overline {\Omega^s[\sigma]}\,,
\qquad f \in \left[{\cal F}^{T,s}_\sigma \cap {\cal M}^T\right]\,.\end{equation} The
latter means that electromagnetic waves propagate with the unit
velocity. As a consequence, the embedding ${\cal E}^s_\sigma
\,\subset\, {\cal C}\langle\Omega^s[\sigma]\rangle$ is valid. Moreover, this
embedding is {\it dense}. This fact is derived from a vectorial
version of the Holmgren--John--Tataru uniqueness theorem (see
\cite{BIP'07} for detail).
\begin{proposition}\label{Prop 11}
For any $s>0$ and $\sigma \in {\cal R}(\Gamma)$, the relation $\overline
{{\cal E}^s_\sigma}\,=\,{\cal C}\langle\Omega^s[\sigma]\rangle$ is valid (the
closure in ${\cal C}$). In particular, for $s=T>{\rm diam\,}\Omega$ one
has $\overline {{\cal E}^T_\sigma}\,=\,{\cal C}$.
\end{proposition}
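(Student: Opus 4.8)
The plan is to prove the equality by the two opposite inclusions, the density one being the substantive part. The inclusion $\overline{{\cal E}^s_\sigma}\subset{\cal C}\langle\Omega^s[\sigma]\rangle$ is immediate: by the finiteness of domains of influence $(\ref{supp uf electric})$ each reachable field $e^f(\,\cdot\,,T)$ is a curl supported in $\overline{\Omega^s[\sigma]}$, hence lies in the closed subspace ${\cal C}\langle\Omega^s[\sigma]\rangle$, and passing to the closure preserves this. Using the time-invariance relation ${\cal E}^s_\sigma=W^T[{\cal F}^{T,s}_\sigma\cap{\cal M}^T]$ already established, it then suffices to prove the reverse (density) inclusion, which I would do by the orthogonal-complement argument characteristic of the boundary control method: assuming $y\in{\cal C}\langle\Omega^s[\sigma]\rangle$ with $(e^f(\,\cdot\,,T),y)_{\cal C}=0$ for every $f\in{\cal F}^{T,s}_\sigma\cap{\cal M}^T$, I must deduce $y=0$.

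The key step is a duality (Green's) identity relating the forward control problem to a backward (dual) Maxwell problem. Introduce $\{w,v\}$ solving $w_t={\rm curl\,}v$, $v_t=-{\rm curl\,}w$ in $(\Omega\setminus\Gamma)\times(0,T)$ with the perfect-conductor boundary condition $w_\theta=0$ on $\Gamma\times[0,T]$ and the final data $w|_{t=T}=y$, $v|_{t=T}=0$; since the Maxwell evolution is energy-conserving, the backward problem is well posed. Differentiating $e^f\cdot w+h^f\cdot v$ in time, using the identity ${\rm div}(a\wedge b)={\rm curl\,}a\cdot b-a\cdot{\rm curl\,}b$ and the zero initial data $(\ref{max 2})$, and integrating over $\Omega\times(0,T)$, the boundary integral collapses (the $h^f\wedge w$ term vanishes because $w_\theta=0$, and the remaining term is rewritten via $e^f_\theta=f$ from $(\ref{max 3})$) to the formula
$$(e^f(\,\cdot\,,T),y)_{\cal C}=\int_0^T\int_\Gamma f\cdot(\nu\wedge v)\,d\Gamma\,dt=(f,\nu\wedge v)_{{\cal F}^T}.$$
Consequently the orthogonality hypothesis is equivalent to the vanishing of the tangential magnetic trace $\nu\wedge v$ on $\sigma\times[T-s,T]$.

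At this point the dual solution carries vanishing Cauchy data on $\sigma\times[T-s,T]$: the tangential electric part $w_\theta$ vanishes there by the boundary condition and the tangential magnetic part $\nu\wedge v$ vanishes by the identity just derived. Invoking the vectorial version of the Holmgren--John--Tataru uniqueness theorem (see \cite{BIP'07}) propagates this vanishing through the corresponding domain of influence; restricting to the slice $t=T$ gives $w|_{t=T}=y=0$ on $\Omega^s[\sigma]$. As $y$ is already supported in $\overline{\Omega^s[\sigma]}$, we conclude $y=0$, establishing the density inclusion and hence the equality. The ``in particular'' clause follows at once: for $s=T>{\rm diam\,}\Omega$ one has $\Omega^T[\sigma]=\Omega$ by $(\ref{Omega r[A]= Omega})$, while ${\cal C}\langle\Omega\rangle={\cal C}$ by the density of smooth curls in ${\cal C}$.

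The main obstacle I anticipate is making the duality identity rigorous: $W^T$ is only closable (unbounded) and $y$ need be no smoother than a generic element of ${\cal C}$, so the Green's formula and the trace $\nu\wedge v$ must first be justified for smooth final data and then extended by density, with care that the dual solution is regular enough for its lateral Cauchy trace to be meaningful. Beyond this bookkeeping, the genuinely deep input is the vectorial unique continuation theorem itself, which for the Maxwell system is far more delicate than the scalar Holmgren result, since the system does not decouple into independent scalar wave equations; here I would rely on the cited result rather than reprove it.
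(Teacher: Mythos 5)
Your route --- duality between the control problem and a backward Maxwell problem with perfect-conductor boundary condition, followed by vectorial unique continuation --- is exactly the one the paper points to: the text gives no proof of Proposition~\ref{Prop 11}, only the remark that it ``is derived from a vectorial version of the Holmgren--John--Tataru uniqueness theorem'' with a citation to \cite{BIP'07}, and your Green's-identity computation and the resulting formula $(e^f(\,\cdot\,,T),y)_{\cal C}=(f,\nu\wedge v)_{{\cal F}^T}$ are correct. There is, however, one step that fails as literally written. From the orthogonality you obtain vanishing Cauchy data ($w_\theta=0$ and $\nu\wedge v=0$) only on the \emph{one-sided} time interval $\sigma\times[T-s,T]$, and the unique continuation theorem propagates vanishing into the double cone built over that lateral surface, namely $\{(x,t):{\rm dist\,}(x,\sigma)+|t-(T-\tfrac{s}{2})|<\tfrac{s}{2}\}$. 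The cross-section of this cone at $t=T$ is \emph{empty}, so ``restricting to the slice $t=T$'' gives nothing; at best you would learn that $w$ vanishes on $\Omega^{s/2}[\sigma]$ at the intermediate time $T-\tfrac{s}{2}$, not that $y=0$ on $\Omega^s[\sigma]$.

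The missing idea is the standard reflection about $t=T$, and your choice of final data $v|_{t=T}=0$ is precisely what makes it available: extend $w$ evenly and $v$ oddly in $t-T$; the extended pair still solves \eqref{max 1} with $w_\theta=0$ on $\Gamma$, and by the parity the Cauchy data now vanish on the full interval $\sigma\times(T-s,T+s)$. The double cone over \emph{that} surface has cross-section $\Omega^s[\sigma]$ at $t=T$, which yields $w(\cdot,T)=y=0$ there and closes the argument. Two smaller points: (i) the forward inclusion ${\cal E}^s_\sigma\subset{\cal C}\langle\Omega^s[\sigma]\rangle$ needs slightly more than ``$e^f(\cdot,T)$ is a curl supported in $\overline{\Omega^s[\sigma]}$,'' since ${\cal C}\langle A\rangle$ is by definition the closure of curls of fields \emph{supported in} $A$; one should write $e^f(\cdot,T)={\rm curl\,}\!\int_0^T h^f(\cdot,t)\,dt$ and use that $\int_0^T h^f\,dt$ is itself supported in $\overline{\Omega^s[\sigma]}$ by finite propagation speed; (ii) your closing concern about regularity of the generalized dual solution and its lateral traces is legitimate and is handled, as you suggest, by first taking smooth final data and passing to the limit.
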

This property is interpreted as a {\it local approximate boundary
controllability} of the electric subsystem of $\alpha^T_{\rm M}$.

By $E^s_\sigma$ we denote the projection in ${\cal C}$ onto the
reachable subspace $\overline{{\cal E}^s_\sigma}$ and call it a {\it
wave projection}. Recall that $Y^s_\sigma$ is the projection in
${\cal C}$ onto ${\cal C}\langle\Omega^s[\sigma]\rangle$. As a consequence of the
Proposition \ref{Prop 11} we obtain \begin{equation}\label{E=Y}
E^s_\sigma\,=\,Y^s_\sigma \,, \qquad \quad s>0,\,\,\,\sigma \in
{\cal R}(\Gamma)\,. \end{equation}

\subsection{IP of electrodynamics}
\subsubsection*{Setup} A dynamical inverse problem (IP) for the
system (\ref{max 1})--(\ref{max 3}) is set up as follows:

\noindent{\it given for a fixed $T>{\rm diam\,}\Omega$ the
response operator $R^{2T}$, to recover the mani\-fold $\Omega$}.

\noindent A physical meaning of the condition $T>{\rm
diam\,}\Omega$ is the same as in the acoustical case: the
electromagnetic waves need big enough time to prospect the whole
$\Omega$: see (\ref{supp uf electric}) and (\ref{Omega r[A]=
Omega}).
\smallskip

As before, {\it to recover} $\Omega$ means to construct (via given
$R^{2T}$) a Riemannian manifold, which has the same boundary
$\Gamma$, and possesses the response operator, which is equal to
$R^{2T}$. As well as in the scalar case, we will show that
$R^{2T}$ determines the copy $\tilde \Omega$. Thus, $\tilde
\Omega$ will provide the solution to the IP.

\subsubsection*{Model}
Representing the (closed) control operator $W^T: {\cal F}^T \to {\cal C}$ in
the polar decomposition form, one has 
$W^T\,=\,\Psi^T |W^T|$, 
where
$|W^T|\,:=\,\left[\left(W^T\right)^*W^T\right]^{\frac{1}{2}}$ and
$\Psi^T: |W^T|f \mapsto W^T f$ is an isometry from
${{\rm Ran}\,}|W^T|\subset {\cal F}^T$ onto ${{\rm Ran}\,}W^T \subset {\cal C}$
\cite{BSol}. In what follows $\Psi^T$ is assumed to be extended by
continuity to an isometry from $\overline{{{\rm Ran}\,}|W^T|}$ onto
$\overline{{{\rm Ran}\,}W^T}$. Also note that (\ref{closed connecting
form}) implies $|W^T|=(C^T)^{\frac{1}{2}}$.
\smallskip

Recall that ${\cal E}^s_\sigma:=W^T [{\cal F}^{T,s}_\sigma \cap {\cal M}^T]$ is
an electric reachable set and $E^s_\sigma$ is the (wave)
projection in ${\cal C}$ onto $\overline{{\cal E}^s_\sigma}$.

Let us say the (sub)space $\tilde {\cal C}:=\overline {{{\rm Ran}\,}|W^T|}
\subset {\cal F}^T$ to be a {\it model inner space}, $\tilde
{\cal E}^s_\sigma:=|W^T| \left[{\cal F}^{T,s}_\sigma \cap {\cal M}^T\right]
\subset \tilde {\cal C}$ the {\it model reachable sets}. By $\tilde
E^s_\sigma$ we denote the projection in $\tilde {\cal C}$ onto
$\overline{\tilde {\cal E}^s_\sigma}$ and call it a {\it model wave
projection}.

The model and original objects are related through the isometry
$\Psi^T$. In particular, the definitions imply $\Psi^T \tilde
E^s_\sigma=E^s_\sigma \Psi^T$.
\smallskip

Now, let $T>{\rm diam\,}\Omega$. By Proposition \ref{Prop 11}, one
has $\overline {{{\rm Ran}\,}W^T}={\cal C}$. Therefore the isometry $\Psi^T$
turns out to be a unitary operator from $\tilde {\cal C}$ onto ${\cal C}$.
Its inverse $U:=(\Psi^T)^*$ maps ${\cal C}$ onto $\tilde {\cal C}$
isometrically and $U E^s_\sigma=\tilde E^s_\sigma U$ holds.

Let $\tilde Y^s_\sigma:=U Y^s_\sigma U^*$. The property
(\ref{E=Y}) implies \begin{equation}\label{tilde E= tilde Y} \tilde
E^s_\sigma\,=\,\tilde Y^s_\sigma \,, \qquad \quad s>0,\,\,\,\sigma
\in {\cal R}(\Gamma)\,. \end{equation}

\subsubsection*{Solving IP}
Let us show that the operator $R^{2T}$ determines the copy $\tilde
\Omega$.
\begin{enumerate}
\item Find the connecting form $c^T$ by (\ref{C T via R 2T
electric}). Determine the model control operator
$|W^T|=\left(C^T\right)^{\frac{1}{2}}$ (see (\ref{scheme R2T to C
1/2})) and the model inner space $\tilde {\cal C}=\overline{{\rm
Ran\,}|W^T|}\subset {\cal F}^T$.

\item Fix a $\sigma \in {\cal R}(\Gamma)$ and $s \in (0,T)$. In $\tilde {\cal C}$
recover the model reachable set $\tilde {\cal E}^s_\sigma=|W^T|
\left[{\cal F}^{T,s}_\sigma \cap {\cal M}^T\right]\subset \tilde {\cal C}$ and
determine the corresponding projection $\tilde E^s_\sigma$. By
(\ref{tilde E= tilde Y}), we get the projection $\tilde
Y^s_\sigma$. Thus, the map (\ref{map sigma tilde Y}) is at our
disposal.

\item By Proposition \ref{Prop 7}, this map determines the copy
$\tilde \Omega$. Its Maxwell response operator $\tilde R^{2T}$
coincides with the given $R^{2T}$ (see Proposition \ref{Prop 10}).
\end{enumerate}
The IP of electrodynamics is solved.

\subsection{Comments}
$\bullet$\,\,\,In this paper, the condition $T> {\rm diam\,}
\Omega$ is imposed for the sake of simplicity. It provides the
embedding $\check \tau_\sigma C(\Omega) \subset C(\Omega)$, which
is convenient just by technical reasons. However, there is a {\it
time-optimal} setup of the reconstruction problem, which takes
into account a local character of dependence of the acoustical and
Maxwell response operators on a near-boundary part of the
manifold. Namely, by the finiteness of the domain of influence,
for an arbitrary fixed $T>0$ the operator $R^{2T}$ is determined
by the submanifold $\Omega^T[\Gamma]$ (does not depend on the part
$\Omega \backslash\Omega^T[\Gamma]$). Therefore, the natural setup
is: {\it given for a fixed $T>0$ the operator $R^{2T}$, to recover
$\Omega^T[\Gamma]$}. In such a stronger form the problem is solved
in \cite{BIP'07} and \cite{BD_2}.
\smallskip

\noindent$\bullet$\,\,\,In reconstruction via a spectral triple
$\{{\cal A}, {\cal H}, {\cal D}\}$   (see \cite{Connes}, \cite{RenVar}), the
algebra provides a topological space (that is $\widehat {\cal A}$),
whereas the operator ${\cal D}$ encodes a Riemannian metric on
$\widehat {\cal A}$. The metric is recovered (via ${\cal D}$) by means of
the {\it Connes distance formula}. In our scheme, the object
responsible for the metric is a selected family of generators of
the algebra (that is the eikonals).
\smallskip

\noindent$\bullet$\,\,\,Dealing with the reconstruction problem
for a graph, one can introduce the straightforward analog of the
eikonal algebra ${\mathfrak T}$. However, this algebra turns out to be
noncommutative. By this, we have to deal with its {\it Jacobson
spectrum} $\widehat{\mathfrak T}$, which is the topologized set of the
primitive ideals of ${\mathfrak T}$ \cite{Mur}. As the known examples show,
its structure is related with geometry of the graph but the
relation is of rather implicit character. This challenging problem
is open yet. An intriguing fact is that in some examples the space
$\widehat{\mathfrak T}$ is non-Hausdorff. It contains "clusters", which are
the groups of nonseparable points. Presumably, the clusters of
$\widehat{\mathfrak T}$ are related with interior vertices of the graph.
\smallskip

\section{Appendix}

Here we give proof of Lemmas~\ref{Lemma 1}, \ref{Eff}, \ref{hatpihom}. 

The standard operations on vector fields on the manifold $\nabla,\,{\rm div\,},\,{\rm curl\,}$
are understood in the generalized sense.
Here are standard formulas of vector analysis:
\begin{align}
    {\rm div}\,(\varphi u) &= \nabla \varphi \cdot u + \varphi\,{\rm div}\, u, \label{divfu} \\
    {\rm div}\,(u\wedge v) &= {\rm curl}\, u \cdot v - u \cdot {\rm curl}\, v, \label{divuv} \\
    {\rm curl}\,(\varphi u) &= \nabla \varphi \wedge u + \varphi\,{\rm curl}\, u. \label{rotfu}
\end{align}
In~\eref{divfu} and \eref{rotfu} a function $\varphi$ is
Lipschitz; a field $u$ is locally integrable and its
divergence is also locally integrable. In~\eref{divuv} we may
suppose that $u$ or $v$ is Lipschitz, and the other field is
locally integrable and has locally integrable ${\rm curl}\,$.

\subsection{Proof of Lemma~\ref{Lemma 1}}

Let the field $z\in\vec{\cal H}$ satisfy ${\rm curl}\, z\in \vec{\cal H}$.
Following~\cite{Leis}, we say that the field $z$ satisfies the condition
\begin{equation}
    z_\theta|_\Gamma = 0,    \label{zte}
\end{equation}
if for any field $v\in\vec{\cal H}$, such that
${\rm curl}\, v\in \vec{\cal H}$, we have
$$
    (z, {\rm curl}\, v)_\Omega = ({\rm curl}\, z, v)_\Omega.
$$
Here and further in this section $(\cdot, \cdot)_U$ and
$\|\cdot\|_U$ means the inner product and the norm in $L_2(U)$ or
$\vec L_2(U)$. It can be shown, that due to smoothness of the
boundary $\Gamma$ it suffices to check this
condition only for $v\in \vec C^\infty(\Omega)$.

Introduce the space
$$
    F := \{ u \in \vec{\cal H} : {\rm div}\, u \in L_2(\Omega),\, {\rm curl}\, u \in \vec{\cal H},\, u_\theta|_\Gamma = 0\}
$$
with the norm
$$
    \|u\|_F^2 := \|u\|_\Omega^2 + \|{\rm div}\, u\|_\Omega^2 + \|{\rm curl}\, u\|_\Omega^2.
$$

The following result
is valid for an $\Omega \subset {\mathbb R}^3$ (see \cite{Leis},
section~8.4) and can be easily generalized on a smooth manifold.
\begin{theorem}
    \label{FL2}
    The embedding of the space $F$ to $\vec{\cal H}$ is compact.
\end{theorem}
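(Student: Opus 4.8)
The plan is to reduce the assertion to its Euclidean counterpart (the result of \cite{Leis}, section~8.4) by localization. Since $\Omega$ is compact, cover it by finitely many coordinate charts, some lying in the interior and some adapted to the boundary (mapping a neighborhood of a boundary point onto a Euclidean half-ball, with $\Gamma$ flattened to the bounding hyperplane). Fix a smooth partition of unity $\{\varphi_j\}$ subordinate to this cover. Given a sequence $(u_k)$ bounded in $F$, it suffices to produce an $\vec{\cal H}$-convergent subsequence of each localized sequence $(\varphi_j u_k)$; a diagonal argument over the finite index set then yields a subsequence along which $u_k=\sum_j \varphi_j u_k$ converges in $\vec{\cal H}$, which is exactly the claimed compactness.

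First I would check that localization keeps the fields in $F$ with controlled norm. By the product formulas \eref{divfu} and \eref{rotfu}, ${\rm div}\,(\varphi_j u_k)=\nabla\varphi_j\cdot u_k+\varphi_j\,{\rm div}\, u_k$ and ${\rm curl}\,(\varphi_j u_k)=\nabla\varphi_j\wedge u_k+\varphi_j\,{\rm curl}\, u_k$, so the divergence and curl of $\varphi_j u_k$ differ from $\varphi_j$ times those of $u_k$ only by zeroth-order terms with bounded coefficients $\nabla\varphi_j$; hence $\|\varphi_j u_k\|_F\leqslant C\,\|u_k\|_F$. The weak tangential condition \eref{zte} is preserved as well: testing $\varphi_j u_k$ against $v\in\vec C^\infty(\Omega)$ and using the product rule moves the scalar factor $\varphi_j$ onto the test field, reducing the required identity to the one already satisfied by $u_k$. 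Transporting $\varphi_j u_k$ through a chart, the intrinsic operators ${\rm div}\,,\,{\rm curl}\,$ turn into the Euclidean ones up to multiplication by smooth, bounded and uniformly invertible coefficient matrices (assembled from the metric $g$ and the volume form) plus further zeroth-order terms. As these coefficients are bounded above and below, boundedness in $F$ transfers to boundedness in the Euclidean space of fields with $L_2$ divergence, $L_2$ curl and vanishing tangential trace, while $L_2$-convergence of the transported fields transfers back to $\vec{\cal H}$-convergence on $\Omega$.

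With this in place, for each $j$ the Euclidean compactness result of \cite{Leis} extracts an $L_2$-convergent subsequence of the transported fields, hence an $\vec{\cal H}$-convergent subsequence of $(\varphi_j u_k)$, and the diagonal extraction above finishes the argument. The interior charts are comparatively harmless: there the boundary condition is vacuous, and for a field supported in the interior one already has the Euclidean identity $\|\nabla w\|_\Omega^2=\|{\rm div}\, w\|_\Omega^2+\|{\rm curl}\, w\|_\Omega^2$, which bounds $w$ in $H^1$ and lets Rellich's theorem supply compactness directly. The main obstacle is therefore confined to the boundary charts, namely showing that the flattening diffeomorphism carries the intrinsic condition $u_\theta|_\Gamma=0$ to the tangential condition required by \cite{Leis} and that the metric-induced variable coefficients neither destroy uniform ellipticity nor alter the boundary condition. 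Concretely, one must verify that the compactness is \emph{stable under replacing the flat metric by a smooth metric} on the half-ball — a bounded, invertible modification of ${\rm div}\,$ and ${\rm curl}\,$ that preserves the graph norm up to constants and the $L_2$ topology — which is precisely the ``easy generalization'' alluded to in the statement and is the point deserving the most care.
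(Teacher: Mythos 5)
The paper offers essentially no proof of Theorem \ref{FL2}: it cites \cite{Leis}, Section 8.4, for the Euclidean case, asserts that the generalization to a smooth manifold is easy, and remarks in passing that in fact $F=\vec H^1(\Omega)$. Your localization scheme is the natural way to implement that generalization, and its individual steps check out: the product rules \eref{divfu} and \eref{rotfu} give $\|\varphi_j u\|_F\leqslant C\,\|u\|_F$, and the weak condition \eref{zte} does survive multiplication by $\varphi_j$, since the two commutator terms $-(u,\nabla\varphi_j\wedge v)$ and $(\nabla\varphi_j\wedge u,\,v)$ cancel by antisymmetry of the scalar triple product. The only place where your argument is not yet a proof is exactly the place you flag yourself: after pushing forward through a boundary chart, one controls the Euclidean curl of the covariant components $g_{ij}u^j$ but the Euclidean divergence of the density-weighted contravariant representative $\sqrt{|g|}\,u^i$, and these two fields are related by a smooth invertible matrix, so the reduction is not to Leis's constant-coefficient statement but to its variable-coefficient (anisotropic-media) version on a half-ball. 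That version is standard (the Weber--Picard Maxwell compactness property, within the scope of \cite{Leis}), so what remains is a citation rather than a mathematical obstruction; alternatively, the paper's own parenthetical route --- establishing $F=\vec H^1(\Omega)$ via a Gaffney-type inequality for the tangential boundary condition and then invoking Rellich --- closes the same gap without any coefficient bookkeeping, and you might note it as the cleaner way to finish.
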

Actually, the stronger fact holds true: the space $F$ coincides
with vector Sobolev space  $\vec H^1(\Omega)$, which 
is compactly embedded to $\vec{\cal H}$. However,
Theorem~\ref{FL2} will suffice for our purposes.
Theorem~\ref{FL2} is used in spectral analysis of the Maxwell
operator on compact manifolds (see, e.g., \cite{DF}).
\smallskip

Let us outline the scheme of the proof of
Lemma~\ref{Lemma 1}. We obtain estimates for $L_2$-norms of ${\rm
curl}\,$ and divergence of the difference $\check\tau_\sigma u -
\varepsilon_\sigma u$ by $L_2$-norm of $u\in{\cal C}$
(inequalities \eref{rot_est}, \eref{div_est}), and establish the
boundary condition~\eref{zte} on $\Gamma$ for this difference.
This means that the field $\check\tau_\sigma u -
\varepsilon_\sigma u$ belongs to $F$ with the corresponding norm
estimate, which implies that the operator $\check\tau_\sigma -
\varepsilon_\sigma$ restricted to $\cal C$ is compact (by
compactness of the embedding $F \subset\vec{\cal H}$).

In what follows we consider $X^s_\sigma$ as the
projections in $\vec{\cal H}$, which cut off fields on
$\Omega^s[\sigma]$.

We will use the following relations, which are
valid for any $T>0$:
$$
    \int_{[0, T]} s \, d X^s_\sigma = T X_\sigma^T- \int_{[0, T]} X^s_\sigma \, ds,
$$
$$
    \int_{[0, T]} s \, d Y^s_\sigma = T Y_\sigma^T - \int_{[0, T]} Y^s_\sigma \, ds.
$$
Along with~\eref{spect repr tau} this implies that for $T
> {\rm diam}\,\Omega$ we have
\begin{equation}
     (\varepsilon_\sigma - \check\tau_\sigma)\, y = \left(\int_{[0, T]} (X^s_\sigma - Y^s_\sigma) \, ds\right) y, \quad y \in {\cal C}.
    \label{diffKT}
\end{equation}
To prove Lemma~\ref{Lemma 1} we need to establish a compactness of
the operator, which acts from $\cal C$ to $\vec{\cal H}$ by
$$
    K_\sigma := \int_0^T (X^\xi_\sigma - Y^\xi_\sigma) \, d\xi
$$
(this integral is the same for any $T > {\rm diam}\,\Omega$).
Define a family of operators acting from $\cal C$ to $\vec{\cal
H}$ by
$$
    K^s_\sigma := \int_0^s (X^\xi_\sigma - Y^\xi_\sigma) \, d\xi,\quad
    0 \leqslant s < \infty.
$$
One can easily check the following relation 
\begin{equation}
    \left(\int_0^s X^\xi_\sigma \, d\xi\,\,y\right)(x) = \max\{ s - \tau_\sigma (x), 0 \}\,y(x)\,, \qquad x \in \Omega.
    \label{tauint}
\end{equation}
\begin{lemma}
    \label{lemmaNz}
    Choose $\sigma\subset \Gamma$ and $s>0$.
    Let a field $\beta \in \vec{\cal H}\langle \Omega^s[\sigma]\rangle $
    be smooth in $\Omega^s[\sigma]$ (in particular, smooth on the boundary
    $\Omega^s[\sigma] \cap \Gamma$) and orthogonal to ${\cal C}\langle \Omega^s[\sigma]\rangle $.
    Then for any $z\in\vec{C}^\infty(\Omega)$ one has
    $$
        (\beta, K^s_\sigma\, {\rm curl}\, z)_{\Omega^s[\sigma]} = (\beta, \nabla \tau_\sigma \wedge z)_{\Omega^s[\sigma]}.
    $$
\end{lemma}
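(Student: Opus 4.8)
The plan is to evaluate $(\beta, K^s_\sigma\,{\rm curl}\, z)_{\Omega^s[\sigma]}$ by splitting $K^s_\sigma$ into its $X$-- and $Y$--parts and treating them separately. First I would note that for every $\xi\in[0,s]$ one has $\Omega^\xi[\sigma]\subset\Omega^s[\sigma]$, hence $Y^\xi_\sigma\,{\rm curl}\, z\in{\cal C}\langle\Omega^\xi[\sigma]\rangle\subset{\cal C}\langle\Omega^s[\sigma]\rangle$; since $\beta$ is orthogonal to ${\cal C}\langle\Omega^s[\sigma]\rangle$, this gives $(\beta, Y^\xi_\sigma\,{\rm curl}\, z)_\Omega=0$, so after integrating in $\xi$ the whole $Y$--contribution drops out. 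For the $X$--part I would invoke \eref{tauint}, which identifies $\int_0^s X^\xi_\sigma\,d\xi$ with multiplication by $\max\{s-\tau_\sigma,0\}$; as $\beta$ is supported in $\Omega^s[\sigma]$, where $\tau_\sigma<s$, this reduces the left-hand side to
\begin{equation*}
(\beta, K^s_\sigma\,{\rm curl}\, z)_{\Omega^s[\sigma]}=(\beta,(s-\tau_\sigma)\,{\rm curl}\, z)_{\Omega^s[\sigma]}.
\end{equation*}

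Next I would apply the vector identity \eref{rotfu} with the Lipschitz factor $\varphi=s-\tau_\sigma$ and $u=z$, namely ${\rm curl}\,((s-\tau_\sigma)z)=-\nabla\tau_\sigma\wedge z+(s-\tau_\sigma)\,{\rm curl}\, z$, which lets me write $(s-\tau_\sigma)\,{\rm curl}\, z={\rm curl}\,((s-\tau_\sigma)z)+\nabla\tau_\sigma\wedge z$. Substituting this, the asserted equality becomes equivalent to the single orthogonality relation
\begin{equation*}
(\beta,{\rm curl}\,((s-\tau_\sigma)z))_{\Omega^s[\sigma]}=0,
\end{equation*}
so the whole matter comes down to proving this, which is where the hypotheses on $\beta$ must be spent.

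To establish it I would set $w:=(s-\tau_\sigma)z$ and integrate by parts via \eref{divuv} with $u=w$, $v=\beta$, which gives $\beta\cdot{\rm curl}\, w={\rm div}\,(w\wedge\beta)+w\cdot{\rm curl}\,\beta$ and hence
\begin{equation*}
(\beta,{\rm curl}\, w)_{\Omega^s[\sigma]}=\int_{\partial\Omega^s[\sigma]}(w\wedge\beta)\cdot\nu\,dS+(w,{\rm curl}\,\beta)_{\Omega^s[\sigma]}.
\end{equation*}
The two facts needed to annihilate the right-hand side both flow from $\beta\perp{\cal C}\langle\Omega^s[\sigma]\rangle$: testing this orthogonality against ${\rm curl}\, h$ for smooth $h$ compactly supported in the interior of $\Omega^s[\sigma]$ yields ${\rm curl}\,\beta=0$ there (killing the volume term), while testing against smooth $h$ whose restriction to $\Gamma\cap\Omega^s[\sigma]$ is arbitrary forces $\beta\wedge\nu=0$ on $\Gamma\cap\Omega^s[\sigma]$ (i.e. $\beta$ has no tangential component on this part of $\Gamma$). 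The boundary $\partial\Omega^s[\sigma]$ splits accordingly: on the part lying on $\Gamma$ one has $(w\wedge\beta)\cdot\nu=w\cdot(\beta\wedge\nu)=0$, and on the wavefront $\{\tau_\sigma=s\}$ the factor $w=(s-\tau_\sigma)z$ vanishes, so the boundary integral is zero as well.

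The step I expect to be the main obstacle is the last one: justifying the divergence theorem on $\Omega^s[\sigma]$, whose boundary is only Lipschitz and may be genuinely irregular along the wavefront $\{\tau_\sigma=s\}$ (cut-locus and focal points of the distance function $\tau_\sigma$). The saving feature is precisely that $w$ vanishes on that wavefront, so its contribution drops out independently of the regularity there. To make this rigorous I would either perform the integration by parts on the smooth-boundary sublevel sets $\{\tau_\sigma<s'\}$ with $s'\uparrow s$ and pass to the limit, or approximate $w$ in $H^1$ by fields $\psi_n z$ with $\psi_n$ smooth, supported away from the wavefront but possibly nonzero on $\Gamma\cap\Omega^s[\sigma]$, so that ${\rm curl}\,(\psi_n z)\in{\cal C}\langle\Omega^s[\sigma]\rangle$ is orthogonal to $\beta$ and $(\beta,{\rm curl}\,(\psi_n z))_\Omega\to(\beta,{\rm curl}\, w)_{\Omega^s[\sigma]}$. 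Either route removes the need to control the geometry of the wavefront directly.
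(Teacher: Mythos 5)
Your reductions coincide with the paper's up to the point where the Lipschitz cutoff enters, and there the two arguments part ways in a manner that is exactly the crux. You attach the factor $s-\tau_\sigma$ to $z$, forming $w=(s-\tau_\sigma)z$, and are then forced to integrate by parts over $\Omega^s[\sigma]$ --- which, as you yourself flag, is the weak point, since $\partial\Omega^s[\sigma]$ is in general irregular along the wavefront. The paper attaches the cutoff to $\beta$ instead: it takes $s'<s$, sets $h=\max\{s'-\tau_\sigma,0\}$, notes that $h\beta$ is a Lipschitz field on \emph{all} of $\Omega$ (because $\beta$ is smooth in a neighbourhood of ${\rm supp\,}h\subset\overline{\Omega^{s'}[\sigma]}\subset\Omega^s[\sigma]$) with $(h\beta)_\theta|_\Gamma=0$, integrates by parts over the whole manifold --- whose boundary $\Gamma$ is smooth, so no regularity issue arises --- computes ${\rm curl\,}(h\beta)=\nabla h\wedge\beta=-\nabla\tau_\sigma\wedge\beta$ via \eref{rotfu} and ${\rm curl\,}\beta=0$, and finally lets $s'\to s$. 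The wavefront thus never appears as a boundary of integration. Of your two proposed repairs, the first does not work as stated: the sublevel sets $\{\tau_\sigma<s'\}$ do not have smooth boundaries either (the cut locus of $\sigma$ persists at every radius), so exhaustion in $s'$ buys nothing without finite-perimeter/Gauss--Green machinery that you do not invoke. Your second repair, however, is sound and in fact supersedes the entire integration-by-parts step: take $s'<s$, mollify $(s'-\tau_\sigma)^+$ to get smooth $\psi_n$ supported in $\Omega^s[\sigma]$ with $\psi_n\to(s'-\tau_\sigma)^+$ uniformly and $\nabla\psi_n\to\nabla(s'-\tau_\sigma)^+$ in $L_2$, then ${\rm curl\,}(\psi_n z)\in{\cal C}\langle\Omega^s[\sigma]\rangle$ by the very definition of that subspace, so $(\beta,{\rm curl\,}(\psi_n z))=0$ from the hypothesis alone; passing to the limit (in $n$, then in $s'$) gives $(\beta,{\rm curl\,}w)=0$ without ever needing ${\rm curl\,}\beta=0$ or $\beta_\theta|_\Gamma=0$, and your identity $(s-\tau_\sigma)\,{\rm curl\,}z={\rm curl\,}w+\nabla\tau_\sigma\wedge z$ then finishes the proof. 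I would promote that approximation argument from a patch to the proof itself and drop the divergence theorem on $\Omega^s[\sigma]$ altogether.
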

\begin{proof}
    Let $0<s'<s$.
    By the absolute continuity of Lebesgue integral we have
    \begin{equation}
        (\beta, K^{s'}_\sigma\, {\rm curl}\, z)_{\Omega^{s'}[\sigma]} \to
        (\beta, K^s_\sigma\, {\rm curl}\, z)_{\Omega^s[\sigma]},
        \quad s' \to s-0.
        \label{Kslim}
    \end{equation}
    As is evident, $\beta$ is orthogonal to ${\cal C}\langle \Omega^\xi[\sigma]\rangle $ for $\xi \leqslant s$; therefore
    \begin{align*}
        &(\beta, K^{s'}_\sigma\, {\rm curl}\, z)_{\Omega^{s'}[\sigma]} =
        \int_0^{s'} d\xi\, (\beta, (X^\xi_\sigma - Y^\xi_\sigma)\, {\rm curl}\, z)_{\Omega^\xi[\sigma]} = \\
        &\int_0^{s'} d\xi\, (\beta, X^\xi_\sigma\, {\rm curl}\, z)_{\Omega^\xi[\sigma]}
        \overset{\eref{tauint}}=
        (\beta, (s'-\tau_\sigma)\, {\rm curl}\, z)_{\Omega^{s'}[\sigma]} =\\
        &((s'-\tau_\sigma)\, \beta, {\rm curl}\,
        z)_{\Omega^{s'}[\sigma]}\,.
    \end{align*}
    Define a Lipschitz function $h$ in $\Omega$ as follows
    $$
        h(x) := \max \{s'-\tau_\sigma(x), 0\}
    $$
    We have
    \begin{equation}
        ((s'-\tau_\sigma)\, \beta, {\rm curl}\, z)_{\Omega^{s'}[\sigma]} =
        (h \beta, {\rm curl}\, z)_\Omega
        \label{hgrrot}
    \end{equation}
    (the field $h \beta$ is defined in $\Omega$ since $h$ vanishes outside of $\Omega^{s'}[\sigma] \subset \Omega^s[\sigma]$).
    The field $h\beta$ is Lipschitz, as function $h$ is Lipschitz,
    and the field $\beta$ is smooth in the neighborhood of ${\rm supp}\, h$,
    so we can apply a formula of integration by parts to the right hand side in \eref{hgrrot}.
        Orthogonality of $\beta$ to ${\cal C}\langle \Omega^s[\sigma]\rangle $ implies
    \begin{equation}
        {\rm curl}\,\beta \,|_{\Omega^{s}[\sigma]} = 0, \quad 
        \beta_\theta|_{\Omega^{s}[\sigma]\cap\,\Gamma} = 0.
        \label{rotte}
    \end{equation}
    Due to the second equality we have $(h\beta)_\theta|_\Gamma = 0$.
    So the integral over $\Gamma$ in integration by parts vanishes.
    Applying the first equality in~\eref{rotte} and formula~\eref{rotfu}, we obtain:
    \begin{align*}
        &(h \beta, {\rm curl}\, z)_\Omega =
        ({\rm curl}\, (h \beta), z)_\Omega =
        (\nabla h\wedge \beta, z)_\Omega =
        ((-\nabla \tau_\sigma)\wedge \beta, z)_{\Omega^{s'}[\sigma]} = \\
        &(\beta, \nabla \tau_\sigma \wedge z)_{\Omega^{s'}[\sigma]}.
    \end{align*}
The latter term tends to $(\beta, \nabla \tau_\sigma
\wedge z)_{\Omega^s[\sigma]}$ as $s' \to s$. Taking
into account ~\eref{Kslim}, we obtain the required equality.
\end{proof}

Note that Lemma~\ref{lemmaNz} holds true if $\Omega^s[\sigma] = \Omega$.
\begin{lemma}
        Let $\sigma\subset\Gamma$.
        For a field $z\in\vec{C}^\infty(\Omega)$ we have
    \begin{equation}
        \label{krotkrot}
        (K_\sigma\, {\rm curl}\, z, K_\sigma\, {\rm curl}\, z)_\Omega =
        2\,(K_\sigma\, {\rm curl}\, z, \nabla \tau_\sigma \wedge z)_\Omega.
    \end{equation}
\end{lemma}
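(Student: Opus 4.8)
The plan is to reduce \eref{krotkrot} to the already-proven Lemma~\ref{lemmaNz} by resolving the operator $K_\sigma$ into its ``layers'' $D^s := X^s_\sigma - Y^s_\sigma$ and integrating in $s$. Write $w := K_\sigma\,{\rm curl}\, z$ and, for $0\leqslant s\leqslant T$ with $T>{\rm diam}\,\Omega$ fixed, $W^s := K^s_\sigma\,{\rm curl}\, z = \int_0^s D^\xi\,{\rm curl}\, z\,d\xi$, so that $W^0=0$ and $W^T=w$. The map $s\mapsto W^s$ is an absolutely continuous $\vec{\cal H}$-valued function with derivative $D^s\,{\rm curl}\, z$ for a.e. $s$; hence $\tfrac{d}{ds}\|W^s\|_\Omega^2 = 2\,(W^s, D^s\,{\rm curl}\, z)_\Omega$. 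Integrating from $0$ to $T$ produces the crucial factor $2$:
\begin{equation}
  \|w\|_\Omega^2 \;=\; 2\int_0^T (W^s, D^s\,{\rm curl}\, z)_\Omega\,ds. \notag
\end{equation}
For $s>{\rm diam}\,\Omega$ one has $\Omega^s[\sigma]=\Omega$, both projections act as the identity on ${\rm curl}\, z\in{\cal C}$, and the integrand vanishes, so the integral is genuinely over the range where $\Omega^s[\sigma]\subsetneq\Omega$.

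Next I would identify, for each fixed $s$, the field $\beta := D^s\,{\rm curl}\, z = (X^s_\sigma - Y^s_\sigma)\,{\rm curl}\, z$ as an admissible test field for Lemma~\ref{lemmaNz}. It is supported in $\Omega^s[\sigma]$, since $X^s_\sigma\,{\rm curl}\, z$ is the cut-off of ${\rm curl}\, z$ onto $\Omega^s[\sigma]$ and $Y^s_\sigma\,{\rm curl}\, z\in{\cal C}\langle\Omega^s[\sigma]\rangle$. It is orthogonal to ${\cal C}\langle\Omega^s[\sigma]\rangle$: for $\phi\in{\cal C}\langle\Omega^s[\sigma]\rangle$ one has $X^s_\sigma\phi=\phi$ and $Y^s_\sigma\phi=\phi$, so self-adjointness of the two projections gives $(\beta,\phi)=({\rm curl}\, z,\phi)-({\rm curl}\, z,\phi)=0$.

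With these properties, Lemma~\ref{lemmaNz} yields $(\beta, W^s)_{\Omega^s[\sigma]} = (\beta, \nabla\tau_\sigma\wedge z)_{\Omega^s[\sigma]}$, i.e. $(W^s, D^s\,{\rm curl}\, z)_\Omega = (D^s\,{\rm curl}\, z, \nabla\tau_\sigma\wedge z)_\Omega$ (the support of $\beta$ lets me replace $\Omega^s[\sigma]$ by $\Omega$). Substituting into the displayed identity and pulling the $s$-integral back inside the inner product gives $\|w\|_\Omega^2 = 2\big(\int_0^T D^s\,{\rm curl}\, z\,ds,\ \nabla\tau_\sigma\wedge z\big)_\Omega = 2\,(w, \nabla\tau_\sigma\wedge z)_\Omega$, which is \eref{krotkrot}. (Combined with Cauchy--Schwarz and $|\nabla\tau_\sigma|=1$ a.e., this at once bounds $\|w\|_\Omega\leqslant 2\|z\|_\Omega$, which is the curl-estimate needed for Lemma~\ref{Lemma 1}.)

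The step that needs care --- and which I expect to be the main obstacle --- is verifying the smoothness of $\beta=(X^s_\sigma - Y^s_\sigma)\,{\rm curl}\, z$ demanded by Lemma~\ref{lemmaNz}, since $Y^s_\sigma\,{\rm curl}\, z$ is defined through an orthogonal projection and is not manifestly regular. Here I would exploit that inside $\Omega^s[\sigma]$ the field $\beta$ is simultaneously curl-free and divergence-free: ${\rm curl}\,\beta|_{\Omega^s[\sigma]}=0$ follows from orthogonality to ${\cal C}\langle\Omega^s[\sigma]\rangle$ exactly as in \eref{rotte}, while ${\rm div}\,\beta=0$ there because ${\rm curl}\, z$ is solenoidal in the interior (where $X^s_\sigma$ acts trivially) and $Y^s_\sigma\,{\rm curl}\, z\in{\cal C}\subset{\cal J}$. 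A div-curl system being elliptic, $\beta$ is smooth in the interior of $\Omega^s[\sigma]$; the tangential condition $\beta_\theta|_{\Omega^s[\sigma]\cap\Gamma}=0$ together with smoothness of $\Gamma$ upgrades this up to $\Omega^s[\sigma]\cap\Gamma$, and the only singular set $\partial\Omega^s[\sigma]\cap\Omega$ lies outside the region where Lemma~\ref{lemmaNz} requires regularity. The remaining points are routine: the strong measurability and boundedness of $s\mapsto D^s\,{\rm curl}\, z$ (each projection is a contraction) justifying the differentiation of $\|W^s\|_\Omega^2$, and the final Fubini interchange.
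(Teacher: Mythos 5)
Your proof is correct and follows essentially the same route as the paper's: both resolve $K_\sigma$ into the layers $X^s_\sigma - Y^s_\sigma$, obtain the factor $2$ from the symmetry of the double integral in $(s,\xi)$ (your differentiation of $\|W^s\|_\Omega^2$ is that same identity in fundamental-theorem-of-calculus form), and then invoke Lemma~\ref{lemmaNz} with $\beta=(X^s_\sigma-Y^s_\sigma)\,{\rm curl}\, z$. Your added verifications that $\beta$ is orthogonal to ${\cal C}\langle\Omega^s[\sigma]\rangle$ and smooth in $\Omega^s[\sigma]$ are correct and merely fill in details the paper asserts without proof.
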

\begin{proof}
    We have
    \begin{align}
        &
        (K_\sigma {\rm curl}\, z, K_\sigma{\rm curl}\, z)_\Omega =
        \int_0^T ds\, ((X^s_\sigma - Y^s_\sigma)\, {\rm curl}\, z,
        K_\sigma {\rm curl}\, z)_\Omega =\notag\\
        &
        \int_0^T ds\, \int_0^T d\xi\,
        ((X^s_\sigma - Y^s_\sigma)\, {\rm curl}\, z, (X^\xi_\sigma - Y^\xi_\sigma)
        \, {\rm curl}\, z)_\Omega=\notag\\
        &
        2 \int_0^T ds\, \int_0^s d\xi\,
        ((X^s_\sigma - Y^s_\sigma)\, {\rm curl}\, z, (X^\xi_\sigma - Y^\xi_\sigma)
        \, {\rm curl}\, z)_\Omega =\notag\\
        & 2 \int_0^T ds\,
        ((X^s_\sigma - Y^s_\sigma)\, {\rm curl}\, z, K^s_\sigma\, {\rm curl}\, z)_{\Omega^s[\sigma]}.
        \label{2Re}
    \end{align}
    As is clear, the field $\beta := (X^s_\sigma - Y^s_\sigma)\, {\rm curl}\, z$
    is orthogonal to ${\cal C}\langle \Omega^s[\sigma]\rangle $.
    Moreover, it is smooth in $\Omega^s[\sigma]$, since it is solenoidal and satisfies~\eref{rotte}.
    So we can apply Lemma~\ref{lemmaNz} to the integrand:
    $$
        ((X^s_\sigma - Y^s_\sigma)\,{\rm curl}\, z, K_\sigma^s\,{\rm curl}\, z)_{\Omega^s[\sigma]} =
        ((X^s_\sigma - Y^s_\sigma)\,{\rm curl}\, z, \nabla \tau_\sigma \wedge z)_{\Omega^s[\sigma]}.
    $$
    Substituting this to~\eref{2Re}, we obtain
    \begin{align*}
        &(K_\sigma{\rm curl}\, z, K_\sigma{\rm curl}\, z)_\Omega =
        2 \int_0^T ds\, ((X^s_\sigma - Y^s_\sigma)\,{\rm curl}\, z, \nabla\tau_\sigma \wedge z)_{\Omega^s[\sigma]} =
        \\
        &2\, (K_\sigma\,{\rm curl}\, z, \nabla\tau_\sigma \wedge z)_\Omega.
    \end{align*}
\end{proof}

Applying~\eref{krotkrot} to
$z\in\vec{C}^\infty(\Omega)$, we obtain
\begin{align*}
    &\|K_\sigma {\rm curl}\, z\|^2_{\Omega} =
    2\,(K_\sigma {\rm curl}\, z, \nabla\tau_\sigma \wedge
    z)_{\Omega}\leqslant
    C\,\|K_\sigma {\rm curl}\, z\|_{\Omega} \cdot \|z\|_{\Omega}.
\end{align*}
Therefore,
\begin{equation}
    \label{krotestimate}
    \|K_\sigma {\rm curl}\, z\|_{\Omega} \leqslant C\, \|z\|_{\Omega}.
\end{equation}

\begin{lemma}
\label{corollrotk}
For any field 
$u\in{\cal C}$ the relations
\begin{equation}
    \|{\rm curl}\, (K_\sigma u)\|_\Omega \leqslant C\, \|u\|_\Omega
    \label{rot_est}
\end{equation}
and
\begin{equation}
    (K_\sigma u)_\theta |_\Gamma = 0
    \label{te_vanish}
\end{equation}
are valid.
\end{lemma}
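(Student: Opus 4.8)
The plan is to deduce both assertions from the interior estimate \eref{krotestimate} together with the symmetry of the operator $K_\sigma$, which lets me avoid any direct analysis of traces on the moving level surfaces $\{\tau_\sigma = \xi\}$. Two preliminary observations drive everything. First, since each $X^\xi_\sigma$ is self-adjoint in $\vec{\cal H}$ and each $Y^\xi_\sigma$ is self-adjoint in ${\cal C}$, for $u,w\in{\cal C}$ one has $(K_\sigma u,w)_\Omega=(u,K_\sigma w)_\Omega$; that is, $K_\sigma$ is symmetric on ${\cal C}$. Second, for any $\phi\in\vec C^\infty(\Omega)$ the field ${\rm curl}\,\phi$ lies in ${\cal C}$ (take $h=\phi$ in the definition of ${\cal C}$), so \eref{krotestimate} is applicable with $z=\phi$ and gives $\|K_\sigma{\rm curl}\,\phi\|_\Omega\leqslant C\|\phi\|_\Omega$.

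For the curl estimate \eref{rot_est} I work with the weak (distributional) curl of $K_\sigma u$. Testing against a field $\phi\in\vec C^\infty(\Omega)$ compactly supported in the interior of $\Omega$, no boundary term arises, and by the symmetry just noted $(K_\sigma u,{\rm curl}\,\phi)_\Omega=(u,K_\sigma{\rm curl}\,\phi)_\Omega$. Hence, by Cauchy--Schwarz and \eref{krotestimate}, $|(K_\sigma u,{\rm curl}\,\phi)_\Omega|\leqslant\|u\|_\Omega\,\|K_\sigma{\rm curl}\,\phi\|_\Omega\leqslant C\,\|u\|_\Omega\,\|\phi\|_\Omega$. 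Since such $\phi$ are dense in $\vec{\cal H}$, this shows that the distributional curl of $K_\sigma u$ is represented by an $\vec{\cal H}$-field obeying \eref{rot_est}.

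For the boundary relation \eref{te_vanish} I verify condition \eref{zte} for $K_\sigma u$; by the remark following \eref{zte} it suffices to test against $v\in\vec C^\infty(\Omega)$. I consider the two functionals $v\mapsto(u,K_\sigma{\rm curl}\, v)_\Omega$ and $v\mapsto({\rm curl}\,(K_\sigma u),v)_\Omega$. The first is bounded by $C\|u\|_\Omega\|v\|_\Omega$, again by \eref{krotestimate} (the key point being that the bound is in the plain $L_2$-norm of $v$), while the second is bounded by $\|{\rm curl}\,(K_\sigma u)\|_\Omega\|v\|_\Omega$; both therefore extend to bounded functionals on $\vec{\cal H}$. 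On fields $v$ compactly supported in the interior they coincide, by the computation of the previous paragraph. As such fields are dense in $\vec{\cal H}$, the two functionals agree for every smooth $v$, so $(u,K_\sigma{\rm curl}\, v)_\Omega=({\rm curl}\,(K_\sigma u),v)_\Omega$. Combining this with the symmetry identity $(K_\sigma u,{\rm curl}\, v)_\Omega=(u,K_\sigma{\rm curl}\, v)_\Omega$ (valid since ${\rm curl}\, v\in{\cal C}$) gives $(K_\sigma u,{\rm curl}\, v)_\Omega=({\rm curl}\,(K_\sigma u),v)_\Omega$, which is precisely \eref{zte}, i.e. \eref{te_vanish}. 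Note that this argument uses $u\in{\cal C}$ only through symmetry, so no separate smooth approximation of $u$ is needed.

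The delicate point I expect is exactly the passage in \eref{te_vanish} from compactly supported to arbitrary smooth test fields: a priori, integrating by parts against a $v$ that does not vanish on $\Gamma$ produces a boundary term carrying the tangential trace of $K_\sigma u$, and the naive route --- writing $K_\sigma u=\int_0^T(X^\xi_\sigma-Y^\xi_\sigma)u\,d\xi$ and integrating by parts on each $\Omega^\xi[\sigma]$ --- would force control of the traces of the non-smooth fields $(X^\xi_\sigma-Y^\xi_\sigma)u$ on the moving wavefronts $\{\tau_\sigma=\xi\}$. The device above sidesteps this: it is the strength of \eref{krotestimate}, namely that $\|K_\sigma{\rm curl}\, v\|_\Omega$ is dominated by $\|v\|_\Omega$ and not by a higher norm of $v$, that makes both functionals continuous in the $L_2$-norm and hence determined by their values on the dense set of compactly supported fields. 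The only care required is to quote \eref{krotestimate} for the smooth test fields $\phi$ and $v$, which is all that the argument invokes.
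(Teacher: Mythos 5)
Your proof is correct and takes essentially the same route as the paper's: both rest on the symmetry of $K_\sigma$ and the estimate \eref{krotestimate}, obtaining \eref{rot_est} from the bound $|(K_\sigma u,{\rm curl}\,z)_\Omega|\leqslant C\,\|u\|_\Omega\|z\|_\Omega$ for compactly supported $z$, and \eref{te_vanish} from the fact that this same bound holds for arbitrary smooth $z$. Your write-up merely makes explicit the density-and-continuity argument that the paper compresses into the sentence ``Since $z$ is not necessarily compactly supported, the equality \eref{te_vanish} holds true.''
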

\begin{proof}
    Let $z\in \vec{C}^\infty(\Omega)$.
    Operator $K_\sigma$ is self-adjoint by~\eref{krotestimate} and we have
    \begin{align*}
        &|(K_\sigma u, {\rm curl}\, z)_{\Omega}| = |(u, K_\sigma {\rm curl}\, z)_{\Omega}|
        \leqslant
        \|u\|_{\Omega} \cdot \|K_\sigma {\rm curl}\, z\|_{\Omega} \leqslant \\
        &C \|u\|_{\Omega} \cdot \|z\|_{\Omega}.
    \end{align*}
    Since $z$ is arbitrary this estimate implies~\eref{rot_est}.
    Since $z$ is not necessarily compactly supported, the equality \eref{te_vanish} holds true.
\end{proof}

\begin{lemma}
Let $\sigma\subset\Gamma$. For any field $u\in {\cal C}$
we have
\begin{equation}
    \|{\rm div}\, (K_\sigma u)\|_\Omega \leqslant C\, \|u\|_{\Omega}.
    \label{div_est}
\end{equation}
\end{lemma}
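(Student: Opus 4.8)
The plan is to reduce the generalized divergence of $K_\sigma u$ to an explicit pointwise expression and then estimate it directly, exploiting that the divergence kills exactly the troublesome projections $Y^\xi_\sigma$. The starting observation is that every field in ${\cal C}$ is solenoidal, since ${\cal C}\subset{\cal J}$ and ${\cal J}$ consists of divergence-free fields; in particular both $u$ and each $Y^\xi_\sigma u\in{\cal C}\langle\Omega^\xi[\sigma]\rangle\subset{\cal C}$ satisfy ${\rm div}\, u=0$ and ${\rm div}\, (Y^\xi_\sigma u)=0$.

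First I would split
$$K_\sigma u = \int_0^T X^\xi_\sigma u\, d\xi - \int_0^T Y^\xi_\sigma u\, d\xi.$$
The second term is a Bochner integral of a ${\cal C}$-valued function; since ${\cal C}$ is a closed subspace of $\vec{\cal H}$, the integral again lies in ${\cal C}$ and is therefore solenoidal, contributing nothing to the divergence. The first term is handled by the explicit formula \eref{tauint}: taking $s=T>{\rm diam}\,\Omega$, so that $\max\{T-\tau_\sigma,0\}=T-\tau_\sigma$ everywhere on $\Omega$, gives
$$\int_0^T X^\xi_\sigma u\, d\xi = (T-\tau_\sigma)\,u.$$

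Next I would compute the divergence of the remaining term by the product rule \eref{divfu}, which is legitimate because $\varphi:=T-\tau_\sigma$ is Lipschitz and $u$ has (vanishing, hence locally integrable) divergence:
$${\rm div}\,[(T-\tau_\sigma)\,u] = \nabla(T-\tau_\sigma)\cdot u + (T-\tau_\sigma)\,{\rm div}\, u = -\nabla\tau_\sigma\cdot u,$$
using ${\rm div}\, u=0$. Combining the two terms yields ${\rm div}\,(K_\sigma u)=-\nabla\tau_\sigma\cdot u$ in the generalized sense, which in particular lies in $L_2$. Finally, the eikonal equation \eref{nabla tau=1} gives $|\nabla\tau_\sigma|=1$ a.a., whence $|\nabla\tau_\sigma\cdot u|\leqslant|u|$ pointwise and
$$\|{\rm div}\,(K_\sigma u)\|_\Omega = \|\nabla\tau_\sigma\cdot u\|_\Omega \leqslant \|u\|_\Omega,$$
which is \eref{div_est} with $C=1$.

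The only point requiring care is the assertion that $\int_0^T Y^\xi_\sigma u\, d\xi$ is solenoidal: this rests on the measurability of $\xi\mapsto Y^\xi_\sigma u$, so that the Bochner integral is defined, together with the invariance of the closed subspace ${\cal C}$ under Bochner integration. Everything else is a direct computation, so I do not expect a genuine obstacle here — the divergence estimate is considerably softer than the curl estimate \eref{rot_est}, precisely because the projections $Y^\xi_\sigma$, whose non-cutting-off action caused the difficulty there, drop out entirely once the divergence is taken.
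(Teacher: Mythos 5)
Your proof is correct and follows essentially the same route as the paper: split $K_\sigma u$ into the $X^\xi_\sigma$ and $Y^\xi_\sigma$ integrals, observe the latter lies in ${\cal C}$ and is hence solenoidal, reduce the former to $(T-\tau_\sigma)u$ via \eref{tauint}, and apply the product rule \eref{divfu} together with $|\nabla\tau_\sigma|=1$. The only differences are cosmetic: you make explicit the Bochner-integral justification and the final bound $C=1$, which the paper leaves implicit.
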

\begin{proof}
    By the definition of $K_\sigma$, for large enough $T$ we have
    $$
        K_\sigma u = \left(\int_0^T X^s_\sigma \, ds\right) u - \left(\int_0^T E^s_\sigma \, ds\right) u.
    $$
    The second term belongs to ${\cal C}$ and thus is solenoidal 
    in $\Omega$. 
    By~\eref{tauint} the first term is equal to $(T-\tau_\sigma) \, u$.
    Then by formula~\eref{divfu} we have
    $$
        {\rm div}\,(K_\sigma u) = {\rm div}\,((T-\tau_\sigma) \, u) = -\nabla\tau_\sigma \wedge u.
    $$
    This completes the proof.
\end{proof}

\begin{proof}[Proof of Lemma~\ref{Lemma 1}]
Suppose $u\in {\cal C}$. It follows from the
estimates~\eref{rot_est}, \eref{div_est} and boundary
condition~\eref{te_vanish} that
$$
    \| K_\sigma u \|_F \leqslant \widetilde C \, \| u \|_\Omega.
$$
Then by compactness of the embedding $F
\subset\vec{\cal H}$ (Theorem~\ref{FL2}) we
conclude that $K_\sigma \in {\mathfrak K}({\cal C};
\vec{\cal H})$. In view of \eref{diffKT} this completes the proof.
\end{proof}

\subsection{Proof of Lemma~\ref{Eff}}
    At first we prove Lemma for $f \in C^\infty(\Omega)$.

    Choose a finite open cover $\{U_j\}$ of the support of $f$ such that every set of this cover is
    $C^\infty$-diffeomorphic to a ball in case $U_j \cap \Gamma = \emptyset$ or to a semi-ball
    $\{x\in {\mathbb R}^3 : |x| < 1,\,\, x^3 \geqslant 0\}$ otherwise.
    Choose a partition of unity $\zeta_j \in C_0^\infty(U_j)$ such that
    $$
        0 \leqslant \zeta_j \leqslant 1, \quad
        \sum_j \zeta_j \,\Big|_{{\rm supp}\, f} = 1.
    $$
    It is clear that
    $$
        \check f - Y[f] = \sum_j ( \check{\zeta_j f} - Y[\zeta_j f]),
    $$
    and the functions $\zeta_j f$ belong to $C^\infty_0(U_j)$.
    Thus, it is necessary to prove the Lemma for a function $f$ supported
    in some open set $U$ $C^\infty$-diffeomorphic to a ball or a semiball. 
    In this case, for any $y\in\cal C$ we have
    \begin{equation}
        (f y - Y[f]\, y)|_U = \nabla p_y, \quad
        p_y\in H^1(U), 
        \label{npy}
    \end{equation}
    and if the set $U$ intersects with $\Gamma$, then the following equality holds true
    $$
        p_y|_{U\cap\Gamma} = {\rm const}.
    $$
    This can be easily obtained with the help of the Helmholtz decomposition in $U$.

    The function $p_y$ in~\eref{npy} is uniquely determined up to additive constant,
    which can be chosen so that
    \begin{equation}
        p_y|_{U\cap\Gamma} = 0
        \label{pyUGa}
    \end{equation}
    if $U\cap\Gamma \ne \emptyset$, and
    $$
        \int_U p_y \, dx = 0
    $$
    otherwise.
    The Friedrichs and Poincar\'e inequalities imply
    that,
    in the both cases, there is a constant $C$ such that
    $$
        \|p_{y}\|_U \leqslant C \|\nabla p_{y}\|_U = \| f y - Y [f]\, y \|_U \leqslant C \|\check f - Y [f] \| \cdot \|y\|.
    $$
    Therefore, the mapping $y \mapsto p_y$ is continuous from $\cal C$ to $H^1(U)$.

    Now assume that a sequence $y_n$ weakly converges to zero in $\cal C$.
    Then the sequence $p_{y_n}$ weakly converges to zero in $H^1(U)$,
    and due to compactness of the embedding $H^1(U) \subset L_2(U)$
    this implies
    \begin{equation}
        \| p_{y_n} \|_U \to 0, \quad n \to \infty.
        \label{pnto0}
    \end{equation}
    Next, we have
    $$
        \|f y_n - Y[f]\, y_n\|^2_{\Omega} =
        (f y_n, f y_n - Y[f]\, y_n)_{\Omega} =
        (f y_n, \nabla p_{y_n})_{\Omega}.
    $$
    In the last equality we used \eref{npy} and the inclusion ${\rm supp}\, f \subset U$.
    Integrating by parts in this inner product, and
    applying formula~\eref{divfu} and equality ${\rm div}\, y_n = 0$, we arrive at
    $$
        (f y_n, \nabla p_{y_n})_{\Omega} =
        -\int_U \nabla f \cdot y_n\, p_{y_n} \, dx \leqslant
        M \|y_n\|_{\Omega} \cdot \|p_{y_n}\|_U
    $$
    ($M$ depends only on $f$).
    Integral over $\partial U$ vanishes since $f$ vanishes on $\partial U\setminus\Gamma$ and in the
    case $U \cap\Gamma \ne \emptyset$
    we have~\eref{pyUGa}.
    The right hand side of the latter inequality tends to zero because the norms of $y_n$ are bounded and \eref{pnto0} takes place.
    Then, with regard to the result of the previous calculation, we
    get the relation
    $$
        \|f y_n - Y[f]\, y_n\|_{\Omega} \to 0, \quad n \to \infty,
    $$
    which shows that the operator $\check f - Y[f]$ is compact.

    Now let us consider the case $f\in C(\Omega)$.
    The function $f$ can be approximated in $C(\Omega)$ by functions $f_n \in C^\infty(\Omega)$.
    Operators of multiplication by $f_n$ tend to the operator of multiplication by $f$ in the operator norm.
    Hence, the operator $\check f - Y[f]$ is compact as a limit of compact operators.

\subsection{Proof of Lemma~\ref{hatpihom}}
Here we prove the following properties:
\begin{align*}
    \dot\pi (\alpha f + \beta g) & = \alpha \dot\pi (f) + \beta \dot\pi (g),\\
    \dot\pi (fg) &= \dot\pi (f)\, \dot\pi (g),\\
    \| \dot\pi (f) \| &= \|f\|,
\end{align*}
where $f, g \in C(\Omega)$, $\alpha, \beta \in {\mathbb R}$. The
first and second relations follow from
Lemma~\ref{Eff}. For example, consider the second
one. We show that
\begin{equation}
    Y[f]\, Y[g] - Y[fg] \in {\mathfrak K}.
    \label{diff_comp}
\end{equation}
By Lemma~\ref{Eff} we have
$$
    Y[f]\, Y[g] = (f + K_1)\, Y[g] =
    f Y[g] + K = f (g + K_2) + K = fg + \widetilde K,
$$
where $K_1, K_2, K, \widetilde K \in {\mathfrak K}({\cal C},
\vec{\cal H})$. Applying Lemma~\ref{Eff} to the function $fg$, we
obtain~\eref{diff_comp}.

Consider the fourth property. We can restrict
ourselves with smooth $f$ since the mapping $\dot\pi$ is bounded.
The latter follows from the obvious inequality
$$
    \| \dot\pi(f) \| \leqslant \|f\|.
$$
Let us establish the opposite inequality.
We need to show that for any compact operator $K \in {\mathfrak K}$
we have
\begin{equation}
    \|Y[f] + K\| \geqslant \|f\|.
    \label{IfKf}
\end{equation}
Fix a point $x_0\in \Omega\setminus\Gamma$ such that $\nabla f(x_0) \ne 0$
(the case of a constant $f$ is trivial). 
Choose a sequence of functions $\varphi_j\in
C^\infty_0(\Omega\setminus\Gamma)$ such that ${\rm supp}\,
\varphi_j$ shrink to $x_0$ as $j\to\infty$.
Introduce the fields
$$
    y_j := \nabla f \wedge \nabla\varphi_j.
$$
Functions $\varphi_j$ can be chosen such that every field $y_j$
does not vanish identically. Owing to~\eref{divuv} we have ${\rm
div}\, y_j = 0$. Since ${\rm supp}\, y_j$ tend to $x_0$ as
$j\to\infty$, for sufficiently large $j$ the fields $y_j$ belong
to $\cal C$. Further, we have
$$
    f\, y_j = f \nabla f \wedge \nabla\varphi_j = \frac{1}{2} \nabla (f^2) \wedge \nabla\varphi_j,
$$
so by \eref{divuv} ${\rm div}\,(f y_j) = 0$ and for large $j$ the
fields $f y_j$ also belong to $\cal C$. Hence
\begin{equation}
    Y[f] y_j = Y (f y_j) = f y_j.
    \label{EUf}
\end{equation}
Consider a normed sequence
$$
    \tilde y_j = y_j / \|y_j\|.
$$
Obviously, the sequence $\tilde y_j$ weakly converges to zero in
$\cal C$. Therefore $K \tilde y_j \to 0$ in $\cal C$.
With regard to~\eref{EUf} this yields
$$
    \|(Y[f] + K)\, \tilde y_j\| = \|f \tilde y_j + K \tilde y_j\| \to |f(x_0)|, \quad j\to \infty.
$$
Since $\|\tilde y_j\| = 1$ we arrive at the inequality $\|Y[f] +
K\| \geqslant |f(x_0)|$. This occurs for all points
$x_0$, at which $f$ has nonzero gradient. So
\eref{IfKf} holds true.


\begin{thebibliography}{99}
\bibitem{BIP97}
M.I.Belishev.
\newblock {Boundary control in reconstruction of manifolds and
metrics (the BC method).}
\newblock {\em Inverse Problems}, 13(5): R1--R45, 1997.

\bibitem{BCald}
M.I.Belishev.
\newblock {The Calderon problem for two-dimensional manifolds
by the BC-method.}
\newblock {\em SIAM J.Math.Anal.}, 35 (1): 172--182, 2003.

\bibitem{BIP'07}
M.I.Belishev.
\newblock {Recent progress in the boundary control method.}
\newblock {\em Inverse Problems}, 23 (2007), No 5, R1--R67.

\bibitem{BSobolev}
M.I.Belishev. Geometrization of Rings as a Method for Solving
Inverse Problems. {\em Sobolev Spaces in Mathematics III.
Applications in Mathematical Physics, Ed. V.Isakov.}, Springer,
2008, 5--24.

\bibitem{BD_1}
M.I.Belishev and M.N.Demchenko.
\newblock{Time-optimal reconstruction of
Riemannian manifold via boundary electromagnetic measurements.}
\newblock{\em Journal of Inverse and Ill-Posed Problems}, 19 (2011), no 2,
167--188.

\bibitem{BD_2}
M.I.Belishev and M.N.Demchenko.
\newblock{C*-algebras and inverse problem of electrodynamics.}
\newblock{arXiv:1205.7090, 2012}.

\bibitem{BSol}
M.S.Birman, M.Z.Solomyak. Spectral Theory of Self-Adjoint
Operators in Hilbert Space. \textit{D.Reidel Publishing Comp.},
1987.

\bibitem{Connes}
A.Connes.
\newblock{Noncommutative Geometry.}
\newblock{\em Academic Press, London and San Diego,} 1994.

\bibitem{DMN}
M.N.~Demchenko.
\newblock{On a partially isometric transform of divergence-free vector fields}.
\newblock{\em Journal of Mathematical Sciences}, 166 (1), 11--22, 2010.

\bibitem{DF}
M.N. Demchenko, N.D. Filonov.
\newblock{Spectral asymptotics of the Maxwell operator on Lipschitz manifolds with boundary}.
\newblock{\em Translations of the American Mathematical Society}, 225 (2), 73--90, 2008.

\bibitem{Landi}
G.Landi.
\newblock{An Introduction to Noncommutative Spaces and
their Geometry.}
\newblock{arXiv:hep-th/9/01078v1 16 Jan 1997}.

\bibitem{Leis}
Leis R.,
\emph{Initial Boundary Value Problems in Mathematical Physics},
B. G. Teubner Gmbh, Stuttgart, 1986.

\bibitem{Mur}
G.J.Murphy.
\newblock{$C^*$-Algebras and Operator Theory.}
\newblock{\em Academic Press, San Diego}, 1990.

\bibitem{Nai}
M.A.Naimark.
\newblock{Normed Rings.}
\newblock{\em WN Publishing, Gronnongen, The Netherlands}, 1970.

\bibitem{Plamen}
B.A.Plamenevskii.
\newblock{Pseudo-differential Operators on Piecewise Smooth Manifolds (in Russian).}
\newblock{\em Novosibirsk}, 2010.

\bibitem{RenVar}
A.Rennie and J.C.Varilly.
\newblock{Reconstruction of Manifolds in Noncommutative Geometry.}
\newblock{arXiv:math/0610418v4 [math.OA] 4 Feb 2008}.

\bibitem{Sch}
G.Schwarz.
\newblock {Hodge decomposition - a method for solving boundary value
problems}.
\newblock {\em Lecture notes in Math.}, 1607.
\newblock{\em Springer--Verlag, Berlin}, 1995.

\end{thebibliography}
\end{document}